\documentclass[letterpaper, 10 pt, conference]{ieeeconf} 
\IEEEoverridecommandlockouts  

\usepackage{amsmath}
\usepackage{amssymb}
\usepackage{amsthm}
\usepackage{mathtools}
\usepackage{derivative}
\NewDerivative{\tpdv}{\partial}[style-frac=\tfrac]
\usepackage{xcolor}
\usepackage{bm}

\usepackage[noadjust]{cite}
\usepackage{url}

\usepackage{graphicx}
\usepackage[export]{adjustbox} 
\graphicspath{{figures/}}

\newtheorem{theorem}{Theorem}
\newtheorem{lemma}{Lemma}

\newtheorem{corollary}{Corollary}

\theoremstyle{definition}
\newtheorem{definition}{Definition}

\newtheorem{remark}{Remark}

\newtheorem{property}{Property}

\newcommand{\R}{\mathbb{R}}

\newcommand{\T}{^\top}

\newcommand{\bzero}{\mathbf{0}}

\renewcommand{\bf}{\mathbf{f}} 
\newcommand{\bg}{\mathbf{g}}

\newcommand{\bk}{\mathbf{k}}

\newcommand{\bq}{\mathbf{q}}
\newcommand{\br}{\mathbf{r}}
\newcommand{\bs}{\mathbf{s}}

\newcommand{\bu}{\mathbf{u}}

\newcommand{\bx}{\mathbf{x}}

\newcommand{\bz}{\mathbf{z}}

\newcommand{\bC}{\mathbf{C}}

\newcommand{\bG}{\mathbf{G}}

\newcommand{\bI}{\mathbf{I}}

\newcommand{\bK}{\mathbf{K}}

\newcommand{\bM}{\mathbf{M}}

\newcommand{\bW}{\mathbf{W}}

\newcommand{\bY}{\mathbf{Y}}

\newcommand{\btheta}{\bm{\theta}}

\newcommand{\bpsi}{\bm{\psi}}

\newcommand{\bPhi}{\bm{\Phi}}

\newcommand{\bGamma}{\bm{\Gamma}}

\newcommand{\bLambda}{\bm{\Lambda}}
\newcommand{\bnu}{\bm{\nu}}

\newcommand{\bhat}[1]{\skew{3}{\hat}{\bm{#1}}}
\newcommand{\bdothat}[1]{\skew{3}{\dot}{\skew{3}{\hat}{\bm{#1}}}}
\newcommand{\btilde}[1]{\skew{3}{\tilde}{\bm{#1}}}
\newcommand{\bdottilde}[1]{\skew{3}{\dot}{\skew{3}{\tilde}{\bm{#1}}}}

\usepackage{xcolor}
\definecolor{myblue}{RGB}{49, 114, 174}
\definecolor{myred}{rgb}{0.796, 0.235, 0.2}
\definecolor{mygreen}{rgb}{0.22, 0.596, 0.149}
\definecolor{mypurple}{rgb}{0.584,0.345,0.698}

\usepackage{blindtext}

\title{\textbf{High Order Tuners for Adaptive Safety of Robotic Systems}}

\author{Mohammad Mirtaba and Max H. Cohen%
\thanks{Department of Electrical and Computer Engineering, North Carolina State University, Raleigh, NC, \texttt{\{mmirtab,mhcohen2\}@ncsu.edu}.}
}

\begin{document}
\maketitle
\begin{abstract}
    The combination of control barrier functions (CBFs) and adaptive control---a framework referred to as adaptive safety---has proven to be a powerful paradigm for safety-critical control of nonlinear systems with parametric uncertainties. Yet the theoretical conditions for forward invariance within this framework are often quite conservative, and may require using large adaptation gains to achieve acceptable performance, an approach that is traditionally discouraged in adaptive control. This paper mitigates these issues via high-order tuners, a recent class of higher-order adaptation laws that leverages different adaptation gains at different orders of differentiation. We illustrate that these high-order tuners decouple adaptation gain conditions from those placed on the initial conditions of the system required for set invariance. We extend these results to robotic systems whose linear-in-the-parameters structure proves particularly useful for adaptive control. The efficacy of our results are illustrated via simulations.
\end{abstract}

\section{Introduction}

The 
increasing
prevalence of autonomous systems
has raised concerns regarding their safety and the hazards they may pose to humans. 
Control barrier functions (CBFs) \cite{AmesTAC17}, which formalize safety via the notion of set invariance, have been successfully used to design controllers meeting safety objectives, such as constraining the position or velocity of robots. However, safety guarantees through CBFs are heavily based on model knowledge, and such guarantees may be violated in the case of model uncertainties. Hence, in realistic settings, unmodeled dynamics, parametric uncertainties, and unknown operating environments pose challenges to ensuring safe operation of autonomous systems using CBFs.

Adaptive control---a powerful control strategy that leverages online parameter estimation---has been extensively studied and applied to robotic systems with parametric uncertainties to meet control objectives such as stabilization or reference tracking \cite{SlotineIJRR87,SlotineLi,Krstic}. More recently, adaptive control has been used for safety by uniting online parameter estimation with techniques such as CBFs \cite{AndrewACC20,LopezLCSS21,L1CBF,IsalyACC21,Cohen}, model predictive control \cite{adetola2009adaptive,heirung2017dual}, and reference governors \cite{governercbf}.
Adaptive control and CBFs are often united using adaptive CBFs (aCBFs) \cite{AndrewACC20}, which build upon adaptive control Lyapunov functions \cite{Krstic} for adaptive stabilization of nonlinear systems.  
Adaptive CBFs ensure safety by maintaining the system state 
and parameter estimates within an augmented safe set in the joint state space of plant states and parameter estimates.
However, aCBFs impose conditions that 
imply forward invariance of every non-negative super level set of this safe set,
which can result in overly conservative behavior. To mitigate conservatism, robust aCBFs (RaCBFs) \cite{LopezLCSS21}
relax the aCBF invariance conditions, aligning them more closely with original CBF conditions \cite{AmesTAC17},
which
leads to less conservative performance. 
However, both aCBFs and RaCBFs require a restrictive condition for starting in the augmented safe set, 
potentially requiring the use of large adaptation gains, typically discouraged in adaptive control, to overcome initial parameter estimation errors to ensure safety.

High-order tuners \cite{hotadaptive,DanielLeACC22,TeelCDC22}
are a recent class of adaptation laws that draw inspiration from accelerated optimization algorithms, such as Nesterov's method \cite{nesterov269method,BoydJMLR16}, and have been shown to improve performance compared to traditional gradient-based adaptive update laws. While previous works have derived the update laws governing these high-order tuners from variational \cite{JordanPNAS16} and symplectic \cite{hotadaptive} principles, the resulting update laws typically consist of a traditional gradient-based adaptive update law in conjunction with a low-pass filter \cite{hotadaptive,TeelCDC22}. Although these methods have demonstrated success for parameter estimation \cite{hotadaptive,TeelCDC22} and adaptive stabilization \cite{DanielLeACC22} they have, to our knowledge, yet to be investigated in the context of adaptive safety.


In this paper, we integrate high-order tuners into the adaptive safety framework, illustrating how the resulting filtered update laws mitigate the challenges associated with requiring large adaptation gains for safety. This is achieved by pairing such update laws with a tunable RaCBF, where the CBF condition depends on the bandwidth of the filter, balancing robustness of the controller with the rate of adaptation. We first establish safety guarantees for control affine systems with matched parametric uncertainty before extending such results to robotic systems modeled using the manipulator equations. 
While recent approaches \cite{ZengLCSS24} have attempted
to leverage the linear-in-the-parameters (LIP) structure of the manipulator equations for adaptive safety, they inherit limitations reminiscent of those 
in adaptive control of robotic systems 
before the work of Slotine and Li \cite{SlotineIJRR87}. Namely, \cite{ZengLCSS24} requires invertibility of the estimated inertia matrix and access to acceleration measurements, both of which present practical challenges, whereas other approaches \cite{DeyLCSS25} do not exploit the LIP structure of the dynamics. In contrast, we leverage techniques based on \cite{TamasRAL22,CohenARC24} where safety is established by tracking a safe reference velocity. To accomplish this safe tracking objective, we propose a modification of the classic Slotine-Li controller \cite{SlotineIJRR87}, which yields bounded tracking (asymptotic tracking under additional assumptions) when paired with a high-order tuner, and allows for establishing safety guarantees when the reference velocity satisfies CBF-like conditions. Importantly, our approach does not require acceleration measurements nor invertibility of the estimated inertia matrix, and exploits the LIP dynamics. 

In summary, the contributions of this paper are threefold: \textbf{1)} we integrate high-order tuners with adaptive safety to mitigate restrictive adaptation gain conditions for control affine systems with matched parametric uncertainties; \textbf{2)} we propose a novel adaptive tracking controller for robotic systems with a high-order tuner adaptive update law; \textbf{3)} we integrate this tracking controller with our results on high-order tuners 
for safe adaptive control of robotic systems.



\section{Preliminaries and Problem Formulation}
\label{sec:prelim}

\noindent\textbf{Control Barrier Functions.}
Consider a nonlinear system:
\begin{align}
    \dot{\bx} = \bf(\bx) + \bG(\bx)\bu,
\label{eq:general_knowen_nonlinear_system}
\end{align}
with state $\bx \in \mathbb{R}^n$, control input $\bu \in \mathbb{R}^m$, drift dynamics $\bf:\mathbb{R}^n \rightarrow  \mathbb{R}^n$, 
and control directions $\bG:\mathbb{R}^n \rightarrow \mathbb{R}^{n\times m}$. 
Provided these dynamics are locally Lipschitz, then any locally Lipschitz feedback controller $\bk\,:\,\R^n\rightarrow\R^m$ produces the locally Lipschitz closed-loop system:
\begin{equation}\label{eq:closed-loop-dyn}
    \dot{\bx} = \bf_{\mathrm{cl}}(\bx) \coloneqq  \bf(\bx) + \bG(\bx)\bk(\bx).
\end{equation}
Hence, 
for any initial
condition \( \bx_0\), there exists a maximal interval of existence
\( I(\bx_0) := [0,\tau_{\max}) \subset \mathbb{R} \), where \( \tau_{\max} \in \mathbb{R}_{>0} \), such that
\( t \mapsto \bx(t) \) is the unique solution to (\ref{eq:closed-loop-dyn}) on \( I(\bx_0) \). 
In the CBF framework, the concept of safety is formalized using the notion of set invariance. Recall that a set $\mathcal{C}$ is forward invariant for \eqref{eq:closed-loop-dyn} if,
for any initial condition $\bx_0 \in \mathcal{C}$, the trajectory satisfies
$\bx(t)\in\mathcal{C}$ for all $t\in I(\bx_0)$.



When using CBFs to enforce safety properties, one typically considers 
sets defined as the zero superlevel set of a continuously differentiable function $h\,:\,\R^n\rightarrow\R$ as:
\begin{equation}
    \begin{aligned}
        \mathcal{C} \coloneqq & \{\bx \in \mathbb{R}^n \mid h(\bx) \ge 0\}.
        \label{eq:C}
\end{aligned}
\end{equation}
Barrier functions\footnote{See \cite{KondaLCSS21} for an in-depth account of the regularity conditions on barrier functions, including regular values and the domain where \eqref{eq:barrier} must hold.} are useful for verifying safety properties whereas CBFs help synthesize feedback controllers meeting such properties. Before formally defining CBFs, we define and state the main result with regard to barrier functions.

\begin{definition}[\cite{AmesTAC17,KondaLCSS21}]\label{def:barrier}
A continuously differentiable function $h\,:\,\R^n\rightarrow\R$ defining a set $\mathcal{C}$ as in \eqref{eq:C} is said to be a barrier function for the closed loop system \eqref{eq:closed-loop-dyn} if there exists\footnote{A continuous function $\alpha : \mathbb{R} \to \mathbb{R}$ is
extended class $\mathcal{K}_\infty$ ($\alpha\in \mathcal{K}_\infty^e$) if $\alpha$ is strictly increasing,
$\alpha(0)=0$, and $
\lim_{r \to \pm\infty} \alpha(r) = \pm\infty.
$} $\alpha\in\mathcal{K}_\infty^e$ such that for all $\bx\in\R^n$:
\begin{equation}\label{eq:barrier}
    \dot{h}(\bx) = \tpdv{h}{\bx}(\bx)\bf_{\rm{cl}}(\bx) \ge -\alpha\big(h(\bx)\big). 
\end{equation}
\end{definition}

\begin{theorem}[\cite{AmesTAC17,KondaLCSS21}]\label{thm:barrier}
    If $h$ is a barrier function for \eqref{eq:closed-loop-dyn}, then $\mathcal{C}$ is forward invariant for \eqref{eq:closed-loop-dyn}.
\end{theorem}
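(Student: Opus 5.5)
The plan is a comparison-lemma argument applied along closed-loop trajectories. Fix $\bx_0\in\mathcal{C}$ and let $\bx(\cdot)$ be the unique solution of \eqref{eq:closed-loop-dyn} on $I(\bx_0)=[0,\tau_{\max})$. Define the scalar function $\eta(t)\coloneqq h(\bx(t))$. Since $h$ is continuously differentiable and $\bx(\cdot)$ is $C^1$, $\eta$ is continuously differentiable. By the chain rule and \eqref{eq:barrier} evaluated at $\bx(t)$,
\begin{equation*}
    \dot{\eta}(t) = \tpdv{h}{\bx}(\bx(t))\bf_{\rm cl}(\bx(t)) \ge -\alpha\big(\eta(t)\big), \qquad \eta(0)=h(\bx_0)\ge 0 .
\end{equation*}
The task is thereby reduced to showing that this scalar differential inequality keeps $\eta$ nonnegative on $I(\bx_0)$.

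To do this, I would compare $\eta$ with solutions of the scalar initial value problem $\dot{y}=-\alpha(y)$, $y(0)=\eta(0)\ge 0$. Because $\alpha(0)=0$, the constant $y\equiv 0$ is an equilibrium. Since $\alpha$ is strictly increasing, $-\alpha(y)\le 0$ for $y\ge 0$, so a solution starting at $y(0)\ge 0$ is nonincreasing and can only approach $0$. I would then invoke the comparison lemma (e.g., Khalil, Lemma 3.4) to conclude $\eta(t)\ge y(t)\ge 0$ on $I(\bx_0)$. This gives $\bx(t)\in\mathcal{C}$ for all $t\in I(\bx_0)$, which is the definition of forward invariance.

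The main obstacle is that $\alpha$ is only assumed continuous, not locally Lipschitz, so the comparison ODE may have nonunique solutions, and the standard comparison lemma, which presumes uniqueness, does not apply directly. To avoid this, I would run a direct contradiction argument that uses only the sign of $\alpha$.
\begin{itemize}
    \item Suppose $\eta(t_1)<0$ for some $t_1\in I(\bx_0)$.
    \item Let $t_0\coloneqq\sup\{t\in[0,t_1] : \eta(t)\ge 0\}$. By continuity, $\eta(t_0)=0$ and $\eta<0$ on $(t_0,t_1]$.
    \item On $(t_0,t_1]$, $\eta<0$ implies $-\alpha(\eta)>0$, so $\dot\eta(t)\ge-\alpha(\eta(t))>0$ there.
    \item Hence $\eta$ is nondecreasing on $[t_0,t_1]$, which gives $\eta(t_1)\ge\eta(t_0)=0$, a contradiction.
\end{itemize}
This argument needs no uniqueness or Lipschitz property of $\alpha$. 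It only uses that $\alpha\in\mathcal{K}_\infty^e$ satisfies $\alpha(r)<0$ for $r<0$. It also sidesteps the regular-value subtleties mentioned in the footnote, because the inequality \eqref{eq:barrier} is assumed to hold on all of $\R^n$.
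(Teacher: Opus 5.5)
The paper gives no proof of this theorem; it is imported from \cite{AmesTAC17,KondaLCSS21}, so there is no in-paper argument to compare against. Your proof is correct and complete. Since $\eta=h\circ\bx$ is $C^1$ and the set $\{t\in[0,t_1] : \eta(t)\ge 0\}$ is closed and nonempty, you get $\eta(t_0)=0$ with $\eta<0$ on $(t_0,t_1]$. There \eqref{eq:barrier} gives $\dot\eta\ge-\alpha(\eta)>0$, so the mean value theorem yields $\eta(t_1)>\eta(t_0)=0$, a contradiction.

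The usual routes in the barrier-function literature are a Nagumo-type tangency condition checked only on $\partial\mathcal{C}$, or a comparison lemma against $\dot y=-\alpha(y)$. Your argument is more elementary than either, and needs neither a regular-value assumption on $h$ nor local Lipschitzness of $\alpha$. What it uses instead is that Definition~\ref{def:barrier} imposes \eqref{eq:barrier} on all of $\R^n$, in particular just outside $\mathcal{C}$, where $-\alpha(h)>0$ forces $h$ to increase. You correctly identify this as the reason the footnoted subtleties do not arise. If the inequality were assumed only on $\mathcal{C}$, your step on $(t_0,t_1]$ would be unavailable, and invariance can then fail: take $h(x)=-x^2$ with $\dot x=1$.

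One small correction to your motivation: $\dot y=-\alpha(y)$ does have unique forward solutions for merely continuous $\alpha$. Because $\alpha$ is increasing, any two solutions satisfy $\tfrac{d}{dt}(y_1-y_2)^2=-2(y_1-y_2)\big(\alpha(y_1)-\alpha(y_2)\big)\le 0$. The only real obstacle is that the textbook comparison lemma is stated under a local Lipschitz hypothesis. This does not affect your proof, which does not depend on the issue.
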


Finally, CBFs extend barrier functions to control systems \eqref{eq:general_knowen_nonlinear_system}, and provide conditions under which there exists inputs meeting the barrier condition in \eqref{eq:barrier}.

\begin{definition}[\cite{AmesTAC17}]
\label{def:cbf}
A continuously differentiable function $h\,:\,\R^n\rightarrow\R$ defining a set $\mathcal{C}$ as in \eqref{eq:C} is said to be a CBF for \eqref{eq:general_knowen_nonlinear_system} 
if there exists $\alpha\in\mathcal{K}_\infty^e$ such that for each $\bx\in\R^n$:
\begin{align}
\sup_{\bu \in \mathbb{R}^m}
\left\{
\tpdv{h}{\bx}(\bx)[\bf(\bx) + \bG(\bx)\bu]
\right\}
> -\alpha\big(h(\bx)\big).
\label{eq:cbf}
\end{align}
\end{definition}


The main utility of CBFs is their ability to instantiate \emph{safety filters}, a class of controllers that modify an existing controller in a minimally invasive fashion to ensure safety of the resulting closed-loop system. These controllers are typically computed by solving the quadratic program (QP):

\begin{equation}\label{eq:cbf-qp}
    \begin{aligned}
        \min_{\bu\in\R^m} \quad & \tfrac{1}{2}\|\bu - \bk_{\rm{d}}(\bx)\|^2 \\ 
        \mathrm{subject~to} \quad & \tfrac{\partial h}{\partial \bx}(\bx)[\bf(\bx) + \bG(\bx)\bu]
\geq -\alpha\big(h(\bx)\big),
    \end{aligned}
\end{equation}
where $\bk_{\rm{d}}\,:\,\R^n\rightarrow\R^m$ is a desired controller. When $h$ is a CBF meeting the strict inequality\footnote{See \cite[Remark 1]{CohenARC24} for further comments on strict vs nonstrict inequalities in the context of barrier functions and CBFs.} in \eqref{eq:cbf}, and all of the problem data is locally Lipschitz, the controller from solving \eqref{eq:cbf-qp} is also guaranteed to be locally Lipschitz \cite{CohenLCSS23,CohenARC24}.

\smallskip
\noindent\textbf{Adaptive Safety.}
We now review aCBFs \cite{AndrewACC20,LopezLCSS21}, which extend CBFs to systems with parametric uncertainty: 
\begin{align}
    \dot{ \bx} = \bf(\bx) + \bG(\bx)[\bu+\bm{\Phi}(\bx)\bm{\theta}],
    \label{eq:general_nonlinear_system}
\end{align}
with state $\bx \in \mathbb{R}^n$, control input $\bu \in \mathbb{R}^m$, unknown parameter vector $\bm{\theta} \in \mathbb{R}^p$, regressor matrix $\bm{\Phi}: \mathbb{R}^n \rightarrow \mathbb{R}^{m\times p}$, drift dynamics $\bf:\mathbb{R}^n \rightarrow  \mathbb{R}^n$, and control directions $\bG:\mathbb{R}^n \rightarrow \mathbb{R}^{n\times m}$. Adaptive CBFs unite techniques from adaptive control \cite{Krstic} with those from safety-critical control by pairing a safe controller $\bu=\bk(\bx,\bhat{\theta})$ based on a parameter estimate $\bhat{\theta}$ with a parameter update law $\odv{}{t}\bhat{\theta}=-\bGamma\bpsi(\bx)$ that dynamically adjusts these estimates online to ensure safety. Properties of this controller are then established by the studying the composite closed-loop system:
\begin{align}
\begin{bmatrix} \dot{\bx} \\ \bdothat{\theta} \end{bmatrix} 
&= \begin{bmatrix}
\bf(\bx) + \bG(\bx)\big(\bk(\bx,\bhat{\theta}) + \bm\Phi(\bx)\btheta \big) \\
-\bm\Gamma \, \bpsi(\bx)
\end{bmatrix},
\label{eq:augmented_adaptive_ss}
\end{align}
with an augmented state space of plant states $\bx$ and parameter estimates $\bhat{\theta}$, where $\bGamma\in\R^{p\times p}$ is positive definite. 

The overall objective of this adaptive control architecture is to ensure that the resulting system trajectory satisfies $\bx(t)\in\mathcal{C}$ for all $t$, where $\mathcal{C}\subset\R^n$ is a candidate safe set as defined in \eqref{eq:C}. This objective can be accomplished when the function $h$ defining $\mathcal{C}$ can be certified as an aCBF.

\begin{definition}[\cite{AndrewACC20}]\label{def:acbf}
A continuously differentiable function $h\,:\,\R^n\rightarrow\R$ defining a set $\mathcal{C}$ as in \eqref{eq:C} is said to be an aCBF for \eqref{eq:general_nonlinear_system} if for each $(\bx,\bhat{\theta})\in\R^n\times\R^p$:
\begin{equation}\label{eq:acbf}
    \sup_{\bu\in\R^m}\left\{\tpdv{h}{\bx}(\bx)\left(\bf(\bx) + \bG(\bx)\left[\bu + \bPhi(\bx)\bhat{\theta} \right] \right) \right\} > 0.
\end{equation}
\end{definition}
Controllers meeting the above requirement can be synthesized using a QP approach similar to that in \eqref{eq:cbf-qp}, where the constraint, and hence the resulting controller, depends on the parameter estimate $\bhat{\theta}$; cf. \cite{AndrewACC20}. Safety analysis of this controller is facilitated by the augmented barrier candidate:
\begin{equation}\label{eq:ha}
    h_a(\bx,\bhat{\theta}) \coloneqq h(\bx) - \tfrac{1}{2}(
    \btheta - \bhat{\theta})\T\bGamma^{-1}(\btheta - \bhat{\theta}
    ),
\end{equation}
which defines a set $\mathcal{C}_{a}\subset\R^{n+p}$ in the state space of \eqref{eq:augmented_adaptive_ss} as:
\begin{equation}\label{eq:Ca}
    \mathcal{C}_a \coloneqq \{(\bx,\bhat{\theta})\in\R^{n+p}\mid h_a(\bx,\bhat{\theta}) \geq 0\}.
\end{equation}
Note that $h_a(\bx,\bhat{\theta})\geq0$ implies that $h(\bx)\geq0$, so establishing forward invariance of $\mathcal{C}_a$ is sufficient to accomplish the control objective of ensuring that $\bx(t)\in\mathcal{C}$. The following theorem outlines the main  result with regard to aCBFs.

\begin{theorem}[\cite{AndrewACC20}]\label{thm:acbf}
    Let $h\,:\,\R^n\rightarrow\R$ be an aCBF for \eqref{eq:general_nonlinear_system}, let $\bk\,:\,\R^{n+p}\rightarrow\R^m$ be any locally Lipschitz controller satisfying \eqref{eq:acbf}, and choose the parameter update law:
    \begin{equation}\label{eq:update-law-acbf}
        \bdothat{\theta} = -\bGamma\bpsi(\bx) =  -\bGamma\left(\tpdv{h}{\bx}(\bx)\bG(\bx)\bPhi(\bx) \right)\T,
    \end{equation}
    which, together, define the adaptive closed-loop system \eqref{eq:augmented_adaptive_ss}. Then, the set $\mathcal{C}_a$ as in \eqref{eq:Ca} is forward invariant for \eqref{eq:augmented_adaptive_ss}.
\end{theorem}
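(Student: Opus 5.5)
We plan to show that $h_a$ from \eqref{eq:ha} is a barrier function, in the sense of Definition~\ref{def:barrier}, for the composite closed-loop system \eqref{eq:augmented_adaptive_ss}, now regarded on the augmented state space $\R^{n+p}$. Theorem~\ref{thm:barrier} would then give forward invariance of $\mathcal{C}_a$.

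First we record regularity. Since $\bk$ is locally Lipschitz and $\bf,\bG,\bPhi$ are locally Lipschitz, the right-hand side of \eqref{eq:augmented_adaptive_ss} is locally Lipschitz, so solutions exist and are unique on a maximal interval. This step also needs $\tpdv{h}{\bx}$ to be locally Lipschitz so that $\bpsi$ is; we assume this, as is standard. Moreover, $h_a$ is continuously differentiable, because $h$ is and the quadratic term is smooth.

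The main step is computing $\dot h_a$ along trajectories. Write $\btilde{\theta}=\btheta-\bhat{\theta}$, so that $\bdottilde{\theta}=-\bdothat{\theta}=\bGamma\bpsi(\bx)$. Then
\begin{equation*}
\dot h_a = \tpdv{h}{\bx}(\bx)\big(\bf+\bG[\bk+\bPhi\btheta]\big) - \btilde{\theta}\T\bGamma^{-1}\bdottilde{\theta}.
\end{equation*}
We split $\bPhi\btheta=\bPhi\bhat{\theta}+\bPhi\btilde{\theta}$. The term $\tpdv{h}{\bx}\bG\bPhi\btilde{\theta}=\bpsi(\bx)\T\btilde{\theta}$ cancels exactly against $\btilde{\theta}\T\bGamma^{-1}\bGamma\bpsi(\bx)$; this cancellation is the reason for the choice of update law \eqref{eq:update-law-acbf}. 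What remains is
\begin{equation*}
\dot h_a = \tpdv{h}{\bx}(\bx)\big(\bf(\bx)+\bG(\bx)[\bk(\bx,\bhat{\theta})+\bPhi(\bx)\bhat{\theta}]\big),
\end{equation*}
which is nonnegative because $\bk$ satisfies \eqref{eq:acbf}. Here "satisfying \eqref{eq:acbf}" is interpreted as the expression inside the supremum being positive, or at least nonnegative, at $\bu=\bk(\bx,\bhat{\theta})$.

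We conclude that $\dot h_a\ge0\ge-\alpha(h_a)$ on the set where $h_a\ge 0$. The inequality $\dot h_a \ge -\alpha(h_a)$ may fail where $h_a<0$, so the global condition \eqref{eq:barrier} is not immediately available. This is the main obstacle. We would handle it directly, without appealing to Theorem~\ref{thm:barrier} globally: since $\dot h_a\ge 0$ everywhere along solutions, $h_a(\bx(t),\bhat{\theta}(t))$ is nondecreasing in $t$. Hence $h_a(\bx_0,\bhat{\theta}_0)\ge0$ implies $h_a\ge0$ for all $t$ in the interval of existence, which is exactly forward invariance of $\mathcal{C}_a$. As a byproduct, every nonnegative superlevel set of $h_a$ is invariant, which matches the conservatism discussed in the introduction.
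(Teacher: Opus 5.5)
Your proof is correct. It is essentially the argument the paper relies on: the paper defers to \cite{AndrewACC20} and only notes that the aCBF design yields $\dot h_a\ge 0$, using the same update-law cancellation of the cross term $\bpsi(\bx)\T(\btheta-\bhat{\theta})$ that appears in its proof of Theorem~\ref{thm:t-racbf}; closing by monotonicity of $h_a$ along solutions instead of invoking Theorem~\ref{thm:barrier} is sound, and it correctly handles the fact that $\dot h_a\ge 0$ alone does not verify \eqref{eq:barrier} on $\{h_a<0\}$.
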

There are two notable distinctions between the invariance results when using CBFs and aCBFs. First, the set $\mathcal{C}$ is \emph{not} forward invariant; rather, the augmented set $\mathcal{C}_{a}$ is forward invariant. Thus, to guarantee that $\bx(t)\in\mathcal{C}$ for all time, \eqref{eq:augmented_adaptive_ss} must start in $\mathcal{C}_{a}$, a sufficient condition for which is:
\begin{equation}\label{eq:acbf-gain-condition}
    h(\bx_0) \geq (\|\btheta - \bhat{\theta}_0\|^2)/(2\lambda_{\min}(\bGamma)),
\end{equation}
where $(\bx_0,\bhat{\theta}_0)\in\R^{n+p}$ is the initial condition of the adaptive closed-loop system \eqref{eq:augmented_adaptive_ss} and $\lambda_{\min}(\cdot)$ returns the minimum eigenvalue. This condition can be satisfied by i) initializing \eqref{eq:general_nonlinear_system} with a large safety margin $h(\bx_0)>0$, ii) initializing $\bhat{\theta}_0$ close to $\btheta$, iii) choosing large adaptation gains $\bGamma$, or some combination thereof. Second, the right-hand-side of the aCBF condition \eqref{eq:acbf} is stricter than that in \eqref{eq:cbf}, requiring the existence of inputs that instantaneously increase the value of $h$. In contrast, the $\alpha\in\mathcal{K}_{\infty}^e$ in \eqref{eq:cbf} allows $h$ to decrease, so long as it does not become negative. This second distinction can be rectified through the use of \emph{robust} aCBFs \cite{LopezLCSS21}.

\smallskip
\noindent\textbf{Robust Adaptive Safety.}
\label{sec:robust_adaptive_safety}
We now review RaCBFs \cite{LopezLCSS21}, which reduce conservatism of aCBFs by allowing one to place $\alpha\in\mathcal{K}_{\infty}^e$ on the right-hand-side of the aCBF condition \eqref{eq:acbf}. This approach is facilitated by assuming knowledge of bounds on the parameter estimation error in the sense that for any $\bhat{\theta}$ there exists another vector $\btilde{\vartheta}\in\R^p$ such that:
\begin{equation}\label{eq:parameter-error-bound}
    \|\btheta - \bhat{\theta}\| \leq \|\btilde{\vartheta}\|.
\end{equation}
This bound can be formed using, e.g., parameter projection \cite[App. E]{Krstic}, set membership identification \cite{LopezLCSS21}, or concurrent learning \cite{IsalyACC21,Cohen}, and leads to the notion of a RaCBF.

\begin{definition}[\cite{LopezLCSS21}]\label{def:racbf}
A continuously differentiable function $h\,:\,\R^n\rightarrow\R$ defining a set $\mathcal{C}$ as in \eqref{eq:C} is said to be a RaCBF for \eqref{eq:general_nonlinear_system} if there exists $\alpha\in\mathcal{K}_{\infty}^e$ such that:
\begin{equation}\label{eq:racbf}
    \begin{aligned}
        \sup_{\bu\in\R^m} \dot{h}(\bx,\bhat{\theta},\bu) > - \alpha\big(h(\bx) - \tfrac{1}{2}\btilde{\vartheta}\T\bGamma^{-1}\btilde{\vartheta} \big),
    \end{aligned}
\end{equation}
for each $(\bx,\bhat{\theta})\in\R^n\times\R^p$, 
where:
\begin{equation}\label{eq:h-dot-racbf}
    \dot{h}(\bx,\bhat{\theta},\bu) = \tpdv{h}{\bx}(\bx)\left(\bf(\bx) + \bG(\bx)\left[\bu + \bPhi(\bx)\bhat{\theta} \right] \right),
\end{equation}
and $\btilde{\vartheta}\in\R^p$ is any vector satisfying \eqref{eq:parameter-error-bound}.
\end{definition}
Safety properties of RaCBFs are established using the augmented barrier from \eqref{eq:ha} and are outlined below.

\begin{theorem}[\cite{LopezLCSS21}]\label{thm:racbf}
    Let $h\,:\,\R^n\rightarrow\R$ be a RaCBF for \eqref{eq:general_nonlinear_system}, let $\bk\,:\,\R^{n+p}\rightarrow\R^m$ be any locally Lipschitz controller satisfying \eqref{eq:racbf}, and choose the parameter update law as in \eqref{eq:update-law-acbf},
    which, together, define the adaptive closed-loop system \eqref{eq:augmented_adaptive_ss}. Then, the set $\mathcal{C}_a$ as in \eqref{eq:Ca} is forward invariant for \eqref{eq:augmented_adaptive_ss}.
\end{theorem}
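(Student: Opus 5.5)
We show that $h_a$ from \eqref{eq:ha} is a barrier function for the augmented closed-loop system \eqref{eq:augmented_adaptive_ss} in the sense of Definition~\ref{def:barrier}, with the same $\alpha\in\mathcal{K}_\infty^e$ as in \eqref{eq:racbf}. Forward invariance of $\mathcal{C}_a$ then follows from Theorem~\ref{thm:barrier}, applied on the joint state space $\R^{n+p}$. The closed-loop vector field is locally Lipschitz because $\bk$ and the problem data are, and $h_a$ is continuously differentiable in $(\bx,\bhat{\theta})$.

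The first step is to compute $\dot h_a$ along \eqref{eq:augmented_adaptive_ss}. Write $\btilde{\theta}=\btheta-\bhat{\theta}$. Since $\btheta$ is constant, $\dot{\btilde{\theta}}=-\bdothat{\theta}$, so
\begin{equation*}
\dot h_a=\tpdv{h}{\bx}(\bx)\big(\bf+\bG[\bk+\bPhi\btheta]\big)+\btilde{\theta}\T\bGamma^{-1}\bdothat{\theta}.
\end{equation*}
Adding and subtracting $\bPhi\bhat{\theta}$ splits the first term into $\dot h(\bx,\bhat{\theta},\bk(\bx,\bhat{\theta}))$ from \eqref{eq:h-dot-racbf} plus $\tpdv{h}{\bx}\bG\bPhi\btilde{\theta}$. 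Substituting the update law \eqref{eq:update-law-acbf} gives
\begin{equation*}
\btilde{\theta}\T\bGamma^{-1}\bdothat{\theta}=-\btilde{\theta}\T\big(\tpdv{h}{\bx}\bG\bPhi\big)\T,
\end{equation*}
which exactly cancels the cross term. Hence $\dot h_a=\dot h(\bx,\bhat{\theta},\bk)\ge-\alpha\big(h(\bx)-\tfrac12\btilde{\vartheta}\T\bGamma^{-1}\btilde{\vartheta}\big)$ by \eqref{eq:racbf}.

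The main step is to compare the argument of $\alpha$ with $h_a$. We need
\begin{equation*}
h(\bx)-\tfrac12\btilde{\vartheta}\T\bGamma^{-1}\btilde{\vartheta}\le h_a(\bx,\bhat{\theta}),
\end{equation*}
that is, $\btilde{\theta}\T\bGamma^{-1}\btilde{\theta}\le\btilde{\vartheta}\T\bGamma^{-1}\btilde{\vartheta}$. Once this holds, monotonicity of $\alpha$ gives $-\alpha(h-\tfrac12\btilde{\vartheta}\T\bGamma^{-1}\btilde{\vartheta})\ge-\alpha(h_a)$, and therefore $\dot h_a\ge-\alpha(h_a)$.

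The bound \eqref{eq:parameter-error-bound} only controls Euclidean norms, so this comparison is the real obstacle. I would handle it in one of two ways.
\begin{itemize}
\item Interpret \eqref{eq:parameter-error-bound} as the weighted bound, as in \cite{LopezLCSS21}, where the bound is taken in the $\bGamma^{-1}$-norm. The comparison is then immediate.
\item Replace $\btilde{\vartheta}$ by a vector built from $\lambda_{\min}(\bGamma)$. Then $\btilde{\theta}\T\bGamma^{-1}\btilde{\theta}\le\|\btilde{\theta}\|^2/\lambda_{\min}(\bGamma)\le\|\btilde{\vartheta}\|^2/\lambda_{\min}(\bGamma)$, so the argument of $\alpha$ should be taken with this worst-case value. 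The simplest version is $\bGamma=\gamma\bI$, where both sides reduce to Euclidean norms.
\end{itemize}
With this adjustment the inequality holds at every point, so $h_a$ satisfies \eqref{eq:barrier} for the closed loop, and Theorem~\ref{thm:barrier} yields forward invariance of $\mathcal{C}_a$.
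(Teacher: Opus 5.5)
Your argument is essentially the paper's. The paper cites this result from \cite{LopezLCSS21} without proof, but its proof of Theorem~\ref{thm:t-racbf} follows your template exactly: differentiate $h_a$, cancel the cross term via the update law, apply the RaCBF-type inequality, compare the argument of $\alpha$ with $h_a$, and invoke Theorem~\ref{thm:barrier}.

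The obstacle you flag is genuine, and the paper glosses over it by asserting directly that the Euclidean bound gives $h_a\geq h(\bx)-\tfrac{1}{2}\btilde{\vartheta}\T\bGamma^{-1}\btilde{\vartheta}-\cdots$. That step fails for non-scalar $\bGamma$ (e.g., $\bGamma^{-1}=\mathrm{diag}(1,100)$, $\btilde{\theta}=(0,1)$, $\btilde{\vartheta}=(1,0)$), so the statement itself needs one of your adjustments. The cleanest is to take $\btilde{\vartheta}$ along a unit eigenvector of $\bGamma$ for $\lambda_{\min}(\bGamma)$. This choice is still admissible under \eqref{eq:parameter-error-bound}, since only its norm is constrained, and it gives $\btilde{\vartheta}\T\bGamma^{-1}\btilde{\vartheta}=\|\btilde{\vartheta}\|^2/\lambda_{\min}(\bGamma)\geq\btilde{\theta}\T\bGamma^{-1}\btilde{\theta}$.
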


The conclusion of Theorem \ref{thm:racbf} is essentially the same as that of Theorem \ref{thm:acbf}. The key difference lies in the transient behavior produced by aCBFs \eqref{eq:acbf} vs RaCBFs \eqref{eq:racbf}. The former ensures that \eqref{eq:ha} satisfies $\dot{h}_{a}\geq0$ while the latter ensures that $\dot{h}_{a} \geq - \alpha(h_{a})$, a less restrictive condition that only renders the zero superlevel set of $h_{a}$ forward invariant (cf. \cite{AmesTAC17} for a discussion on the difference between these two conditions). To ensure that $(\bx_0,\bhat{\theta}_0)\in\mathcal{C}_{a}$, one must still ensure that \eqref{eq:acbf-gain-condition} holds. In practice, where one may not be able to arbitrarily increase $h(\bx_0)$ or decrease $\|\btheta - \bhat{\theta}_0\|$, the only choice one may have to satisfy \eqref{eq:acbf-gain-condition} is choose a $\bGamma$, a design parameter, with a large minimum eigenvalue. Historically, large adaptation gains are often discouraged as they can excite unmodeled dynamics, amplify measurement noise, saturate actuators, and induce undesirable and unstable oscillatory behavior. In this paper, we present an approach to adaptive safety that leverages high-order tuners \cite{hotadaptive,DanielLeACC22,TeelCDC22}, providing more flexibility for satisfying \eqref{eq:acbf-gain-condition} and taking steps toward mitigating these limitations. 

\begin{remark}
    Our preliminary definitions and theorems on aCBFs and RaCBFs are specializations of those from \cite{AndrewACC20,LopezLCSS21} to the case of uncertain systems with matched parametric uncertainties \eqref{eq:general_nonlinear_system}, where aCBFs and RaCBFs can be made independent of the parameter estimates, cf. \cite[Rem. 2]{AndrewACC20}. 
\end{remark}

\section{Safe High Order Tuners}
\label{sec:hot}
We now present an approach to adaptive safety that addresses the limitations outlined in the previous section. Our approach leverages \emph{high order tuners}, 
which, as we demonstrate herein,
helps decouple the adaptation gains from conditions needed to start in the augmented safe set \eqref{eq:acbf-gain-condition}. 
In the context of adaptive safety, this high order tuner is:
\begin{equation}\label{eq:update-law-hot}
        \dot{\bm\nu}=  -\bGamma\bpsi(\bx),\qquad
        \bdothat{\theta} = \beta \bm \Gamma(\bm\nu- \hat \btheta), 
\end{equation}
where $h\,:\,\R^n\rightarrow\R$ is from \eqref{eq:C}, $\bpsi$ is from \eqref{eq:update-law-acbf}, $\bnu\in\R^p$ is an intermediate parameter estimate, $\bhat{\theta}$ is the parameter estimate used by the downstream controller $\bu=\bk(\bx,\bhat{\theta})$, $\bGamma\in\R^{p\times p}$ is a positive definite adaptation gain matrix, and $\beta>0$ is another adaptation gain. While various insights into high order tuners are established in \cite{hotadaptive}, we note here that \eqref{eq:update-law-hot} is simply a filtered version of the aCBF/RaCBF update law from \eqref{eq:update-law-acbf}. The intermediate estimate $\bnu$ is generated using the same dynamics as in \eqref{eq:update-law-acbf}, which is then sent through a low-pass filter  with bandwidth proportional to $\beta$ to produce $\bhat{\theta}$, the parameter estimate used by the adaptive controller.
Properties of this approach established by the studying:
\begin{align}
    \begin{bmatrix} 
        \dot{\bx} \\ \dot{\bnu} \\ \bdothat{\theta} 
    \end{bmatrix} 
    = 
    \begin{bmatrix}
        \bf(\bx) + \bG(\bx)\big(\bk(\bx,\bhat{\theta}) + \bPhi(\bx)\btheta \big) \\
        -\bGamma\bpsi(\bx) \\ 
        \beta\bGamma(\bnu - \bhat{\theta})
    \end{bmatrix},
\label{eq:augmented_adaptive_hot_ss}
\end{align}
with state $(\bx,\bnu,\bhat{\theta})\in\R^{n + 2p}$. 
The design of the adaptive controller in \eqref{eq:augmented_adaptive_hot_ss} is similar to that of RaCBFs by assuming that for any $\bnu$ there exists another vector $\btilde{\vartheta}\in\R^p$ bounding the intermediate estimation error:
\begin{equation}\label{eq:parameter-error-bound-nu}
    \|\btheta - \bnu\| \leq \|\btilde{\vartheta}\|.
\end{equation}
Control 
is then facilitated by the notion of a \emph{tunable} RaCBF.

\begin{definition}\label{def:tunable_racbf}
A continuously differentiable function $h\,:\,\R^n\rightarrow\R$ defining a set $\mathcal{C}$ as in \eqref{eq:C} is said to be a Tunable RaCBF (T-RaCBF) for \eqref{eq:general_nonlinear_system} if there exists $\alpha>0$ such that:
\begin{equation}\label{eq:tunable_racbf}
    \begin{aligned}
        \sup_{\bu\in\R^m} \dot{h}(\bx,\bhat{\theta},\bu) > - \alpha\left[h(\bx) - \tfrac{1}{2}\btilde{\vartheta}\T\bGamma^{-1}\btilde{\vartheta} \right]+ \tfrac{2}{\beta}\|\bpsi(\bx)\|^2,
    \end{aligned}
\end{equation}
for each $(\bx,\bhat{\theta})\in\R^n\times\R^p$, where $\dot{h}$ is defined in \eqref{eq:h-dot-racbf}, and $\btilde{\vartheta}\in\R^p$ is any vector satisfying \eqref{eq:parameter-error-bound-nu}.
\end{definition}

The difference between RaCBFs and T-RaCBFs is the term $\tfrac{2}{\beta}\|\bpsi(\bx)\|^2$, an input-to-state safe \cite{AmesLCSS19} inspired robustness margin, not present in Def. \ref{def:racbf}, that compensates for transient behavior of the high order tuner \eqref{eq:update-law-hot}. In the limit as $\beta\rightarrow\infty$ this term vanishes and creates a low-pass filter with infinite bandwidth so that $\bhat{\theta}\approx\bnu$, reducing the  T-RaCBF condition \eqref{eq:tunable_racbf} and update law \eqref{eq:update-law-hot} to those of the RaCBF. As $\beta\rightarrow0$ the effective adaptation gain $\beta\bGamma$ in \eqref{eq:update-law-hot} also approaches zero; this lack of adaptation is compensated for with a larger robustness margin in the controller \eqref{eq:tunable_racbf}. The flexibility offered by $\beta$, balancing adaptation and robustness, is the key to mitigating the challenges in satisfying \eqref{eq:acbf-gain-condition}.

Similar to the previous section, safety analysis of this strategy is facilitated by an augmented barrier candidate: 
\begin{align}\label{eq:ha-tunable}
    h_a(\bx,\bnu,\bhat{\theta}) &\coloneqq h(\bx) - \tfrac{1}{2}\btilde{\theta}\T\bGamma^{-1}\btilde{\theta}
    - \tfrac{1}{2}\btilde{\nu}\T\bGamma^{-1}\btilde{\nu},
\end{align}
where $\btilde{\theta}\coloneqq \btheta - \bnu$ and $\btilde{\nu}\coloneqq \bnu - \bhat{\theta}$,
which defines a set $\mathcal{C}_{a}\subset\R^{n+2p}$ in the state space of \eqref{eq:augmented_adaptive_hot_ss} as:
\begin{equation}\label{eq:Ca-hot}
    \mathcal{C}_a \coloneqq \{(\bx,\bm\nu,\bhat{\theta})\in\R^{n+2p}\mid h_a(\bx,\bm\nu,\bhat{\theta}) \geq 0\}.
\end{equation}
Further, $h_a(\bx,\bm\nu,\bhat{\theta}) \ge 0$ implies that $h(\bx)\geq0$ so that establishing forward invariance of $\mathcal{C}_a$ is sufficient to accomplish the control objective of ensuring that $\bx(t)\in\mathcal{C}$. The following theorem outlines the main result with regard to T-RaCBFs.

\begin{theorem}\label{thm:t-racbf}
    Let $h\,:\,\R^n\rightarrow\R$ be a T-RaCBF for \eqref{eq:general_nonlinear_system}, let $\bk\,:\,\R^{n+p}\rightarrow\R^m$ be any locally Lipschitz controller satisfying \eqref{eq:tunable_racbf}, and choose the update laws as in \eqref{eq:update-law-hot},
    which form the adaptive closed-loop system \eqref{eq:augmented_adaptive_hot_ss}. Provided that:
    \begin{equation}\label{eq:beta-condition}
        \beta \geq \alpha/\lambda_{\min}(\bGamma),
    \end{equation}
    then,
    the set $\mathcal{C}_a$ as in \eqref{eq:Ca-hot} is forward invariant for \eqref{eq:augmented_adaptive_hot_ss}.
\end{theorem}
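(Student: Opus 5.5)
The plan is to show that $h_a$ from \eqref{eq:ha-tunable} is a barrier function, in the sense of Definition~\ref{def:barrier}, for the closed-loop system \eqref{eq:augmented_adaptive_hot_ss} with the linear extended class $\mathcal{K}_\infty^e$ function $r\mapsto\alpha r$. Theorem~\ref{thm:barrier}, applied in the augmented state space $\R^{n+2p}$, then gives forward invariance of $\mathcal{C}_a$. The closed loop is locally Lipschitz because $\bk$ and the problem data are, so Theorem~\ref{thm:barrier} applies. The whole argument is therefore a computation of $\dot h_a$ along \eqref{eq:augmented_adaptive_hot_ss}, followed by a lower bound using \eqref{eq:tunable_racbf} and \eqref{eq:beta-condition}.

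\textbf{Step 1: error dynamics.} With $\btilde{\theta}=\btheta-\bnu$ and $\btilde{\nu}=\bnu-\bhat{\theta}$, the update laws \eqref{eq:update-law-hot} give $\dot{\btilde{\theta}}=\bGamma\bpsi(\bx)$ and $\dot{\btilde{\nu}}=-\bGamma\bpsi(\bx)-\beta\bGamma\btilde{\nu}$.

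\textbf{Step 2: derivative of $h$.} Since $\btheta=\bhat{\theta}+\btilde{\theta}+\btilde{\nu}$, the true derivative of $h$ splits as
\[
\tpdv{h}{\bx}\big(\bf+\bG(\bk+\bPhi\btheta)\big)=\dot h(\bx,\bhat{\theta},\bk)+\bpsi\T(\btilde{\theta}+\btilde{\nu}),
\]
with $\dot h(\bx,\bhat{\theta},\bu)$ as in \eqref{eq:h-dot-racbf}.

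\textbf{Step 3: derivative of $h_a$.} The two quadratic terms contribute $-\btilde{\theta}\T\bpsi$ and $\btilde{\nu}\T\bpsi+\beta\|\btilde{\nu}\|^2$. The $\btilde{\theta}$ cross terms cancel, and we obtain
\[
\dot h_a=\dot h(\bx,\bhat{\theta},\bk)+2\bpsi\T\btilde{\nu}+\beta\|\btilde{\nu}\|^2 .
\]

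\textbf{Step 4: Young's inequality.} We split the cross term as $2\bpsi\T\btilde{\nu}\ge -\tfrac{2}{\beta}\|\bpsi\|^2-\tfrac{\beta}{2}\|\btilde{\nu}\|^2$. The first piece is cancelled exactly by the robustness margin $\tfrac{2}{\beta}\|\bpsi\|^2$ in \eqref{eq:tunable_racbf}. This leaves
\[
\dot h_a\ge -\alpha\big[h-\tfrac12\btilde{\vartheta}\T\bGamma^{-1}\btilde{\vartheta}\big]+\tfrac{\beta}{2}\|\btilde{\nu}\|^2 .
\]

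\textbf{Step 5: comparison with $-\alpha h_a$.} We expand
\[
-\alpha h_a=-\alpha\big[h-\tfrac12\btilde{\theta}\T\bGamma^{-1}\btilde{\theta}\big]+\tfrac{\alpha}{2}\btilde{\nu}\T\bGamma^{-1}\btilde{\nu}.
\]
For the $\btilde{\nu}$ part, $\tfrac{\alpha}{2}\btilde{\nu}\T\bGamma^{-1}\btilde{\nu}\le\tfrac{\alpha}{2\lambda_{\min}(\bGamma)}\|\btilde{\nu}\|^2\le\tfrac{\beta}{2}\|\btilde{\nu}\|^2$, where the last inequality is precisely condition \eqref{eq:beta-condition}; the leftover $\tfrac{\beta}{2}\|\btilde{\nu}\|^2$ absorbs it.

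\textbf{Main obstacle.} For the remaining part we need $\tfrac12\btilde{\vartheta}\T\bGamma^{-1}\btilde{\vartheta}\ge\tfrac12\btilde{\theta}\T\bGamma^{-1}\btilde{\theta}$, so that the $\btilde{\vartheta}$ bracket in \eqref{eq:tunable_racbf} dominates the true one. This is exactly the step used in Theorem~\ref{thm:racbf}. The norm bound \eqref{eq:parameter-error-bound-nu} gives it directly when $\bGamma$ is a multiple of the identity. For general $\bGamma$ I would interpret the bound in the $\bGamma^{-1}$-weighted norm, or note that the construction of $\btilde{\vartheta}$ (e.g., via projection or set membership) can be chosen to satisfy it, following \cite{LopezLCSS21}.

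\textbf{Conclusion.} With this domination in hand, $\dot h_a\ge-\alpha h_a$ holds on all of $\R^{n+2p}$, and Theorem~\ref{thm:barrier} yields forward invariance of $\mathcal{C}_a$.
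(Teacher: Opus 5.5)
Your proposal is correct and follows essentially the same route as the paper: the same computation $\dot h_a = \dot h(\bx,\bhat{\theta},\bu) + \beta\|\btilde{\nu}\|^2 + 2\btilde{\nu}\T\bpsi$, the same completion of squares that cancels the $\tfrac{2}{\beta}\|\bpsi\|^2$ margin, and the same use of \eqref{eq:beta-condition} to dominate $\tfrac{\alpha}{2}\btilde{\nu}\T\bGamma^{-1}\btilde{\nu}$. The obstacle you flag is genuine: the paper passes from $\|\btheta-\bnu\|\le\|\btilde{\vartheta}\|$ to $\btilde{\theta}\T\bGamma^{-1}\btilde{\theta}\le\btilde{\vartheta}\T\bGamma^{-1}\btilde{\vartheta}$ without comment, and this is only automatic when $\bGamma$ is a multiple of the identity, so your weighted-norm reading of the bound is a legitimate tightening of the hypothesis rather than a change of approach.
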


\begin{proof}
Differentiating \eqref{eq:ha-tunable} along trajectories of \eqref{eq:augmented_adaptive_hot_ss}
yields:
\begin{align}
\dot h_{a}(\bx,\bnu,\bhat{\theta})
&= \tpdv{h}{\bx}(\bx)
\big[\bf(\bx)+\bG(\bx)(\bu+\bm{\Phi}(\bx)\bm{\theta})\big]
\\ \nonumber &+ \tilde{\bm{\theta}}^T \bm{\Gamma}^{-1}\dot{\bm{\nu}}
- \tilde{\bm{\nu}}^T \bm{\Gamma}^{-1}(\dot{\bm{\nu}}-\bdothat{\theta}),
\end{align}
where $\bu=\bk(\bx,\bhat{\theta})$. Adding a couple of zeros then yields:
\begin{equation}
    \begin{aligned}
        \dot h_{a} & = \tpdv{h}{\bx}(\bx)[\bf(\bx)+\bG(\bx)\bu]  
        + \bpsi(\bx)\T\bm{\theta}
        + \tilde{\bm{\theta}}^T \bm{\Gamma}^{-1}\dot{\bm{\nu}}
        \\  &- \tilde{\bm{\nu}}^T \bm{\Gamma}^{-1}(\dot{\bm{\nu}}-\bdothat{\theta}) 
        \pm \bpsi(\bx)\T\bm{\nu}
        \pm \bpsi(\bx)\T\hat{\bm{\theta}},
    \end{aligned}
\end{equation}
where functional dependencies on $\dot{h}_a$ have been suppressed for brevity. 
Rearranging terms yields:
\begin{equation*}
    \begin{aligned}
        \dot h_{a}
        &= \tpdv{h}{\bx}(\bx)[\bf(\bx)+\bG(\bx)\bu]
        + \bpsi(\bx)\T(\bm{\theta}-\bm{\nu})
        \\ &+ \tilde{\bm{\theta}}^T \bm{\Gamma}^{-1}\dot{\bm{\nu}} 
        - \tilde{\bm{\nu}}^T \bm{\Gamma}^{-1}(\dot{\bm{\nu}}-\bdothat{\theta})
         + \bpsi(\bx)\T\bm{\nu}
         \pm \bpsi(\bx)\T\hat{\bm{\theta}}.
    \end{aligned}
\end{equation*}
Using the definitions of $\btilde{\theta}$, $\btilde{\nu}$ then yields:
\begin{equation}
    \begin{aligned}
        \dot h_{a}
        &= \tpdv{h}{\bx}(\bx)[\bf(\bx)+\bG(\bx)\bu]
        + \tilde{\bm{\theta}}\T
        \left(\bm{\Gamma}^{-1}\dot{\bm{\nu}}
        + \bpsi(\bx)\right)\\  &
        - \tilde{\bm{\nu}}^T \bm{\Gamma}^{-1}(\dot{\bm{\nu}}-\bdothat{\theta})
        + \bpsi(\bx)\T(\bm{\nu}-\hat{\bm{\theta}})
        + \bpsi(\bx)\T\hat{\bm{\theta}},
    \end{aligned}
\end{equation}
which further simplifies to:
\begin{equation*}
    \begin{aligned}
        \dot h_{a}
        &= \tpdv{h}{\bx}(\bx)
        [\bf(\bx)+\bG(\bx)(\bu+\bm{\Phi}(\bx)\hat{\bm{\theta}})] \\  &
        + \tilde{\bm{\theta}}\T
        \left(\bm{\Gamma}^{-1}\dot{\bm{\nu}}
        + \bpsi(\bx)\right)
        - \tilde{\bm{\nu}}\T
        \left(
        \bm{\Gamma}^{-1}(\dot{\bm{\nu}}-\bdothat{\theta})
        - \bpsi(\bx)
        \right).
    \end{aligned}
\end{equation*}
Substituting the update laws \eqref{eq:update-law-hot} into the above yields:
\begin{equation}
    \begin{aligned}
        \dot h_{a}
        &= \tpdv{h}{\bx}(\bx)
        [\bf(\bx)+\bG(\bx)(\bu+\bm{\Phi}(\bx)\hat{\bm{\theta}})]
        \\  &+ \beta \|\btilde{\nu}\|^2
        + 2\tilde{\bm{\nu}}\T
        \bpsi(\bx).
    \end{aligned}
\end{equation}
Using the assumption that $\bu=\bk(\bx,\bhat{\theta})$ satisfies \eqref{eq:tunable_racbf}
allows $\dot{h}_a$ to be lower bounded as:
\begin{equation}
\begin{aligned}
    \dot h_{a} &\ge 
-\alpha[h(\bx)-\tfrac{1}{2}\tilde{\bm\vartheta}\T\bm\Gamma^{-1}\tilde{\bm\vartheta}] \\ & + \tfrac{2}{\beta}\|\bpsi(\bx)\|^2 + \beta \|\btilde{\nu}\|^2
+ 2\tilde{\bm{\nu}}\T
\bpsi(\bx).
\end{aligned}
\end{equation}
Noting that:
\begin{equation*}
    \tfrac{\beta}{2}\|\btilde{\nu}\|^2 + \tfrac{2}{\beta}\|\bpsi(\bx)\|^2 + 2\btilde{\nu}\T\bpsi(\bx) = \tfrac{\beta}{2}\|\btilde{\nu} + \tfrac{2}{\beta}\bpsi(\bx)\|^2 \geq 0,
\end{equation*}
allows $\dot{h}_a$ to be further lower bounded as:
\begin{equation*}
    \dot{h}_a \geq -\alpha[h(\bx)-\tfrac{1}{2}\btilde{\vartheta}\T\bm\Gamma^{-1}\btilde{\vartheta}] + \tfrac{\beta}{2}\|\btilde{\nu}\|^2.
\end{equation*}
The definition of $h_a$ in \eqref{eq:ha-tunable} and the bound in \eqref{eq:parameter-error-bound-nu} implies:
\begin{align}
    h_a \ge h(\bx)  - \tfrac{1}{2}\btilde{\vartheta}\T\bGamma^{-1}\btilde{\vartheta}
    - \tfrac{1}{2}\btilde{\nu}\T\bGamma^{-1}\btilde{\nu},
\end{align}
from which it follows that:
\begin{equation*}
    \begin{aligned}
        \dot{h}_a \geq & -\alpha[h_a+\tfrac{1}{2}\btilde{\nu}\T\bm\Gamma^{-1}\btilde{\nu}] + \frac{\beta}{2}\|\btilde{\nu}\|^2 \\
        \geq & -\alpha h_{a} + \tfrac{1}{2}[\beta - \tfrac{\alpha}{\lambda_{\min}(\bGamma)}]\|\btilde{\nu}\|^2.
    \end{aligned}
\end{equation*}
Provided \eqref{eq:beta-condition} holds, $\dot{h}_a$ may be bounded as
$\dot{h}_{a}(\bx, \bnu,\bhat{\theta}) \ge -\alpha h_{a}(\bx, \bnu,\bhat{\theta})$
for all $(\bx,\bnu,\bhat{\theta})\in\R^{n + 2p}$, which implies that $h_{a}$ is a barrier function for \eqref{eq:augmented_adaptive_hot_ss}.
Invoking Theorem \ref{thm:barrier} then implies that $\mathcal{C}_{a}$ from \eqref{eq:Ca-hot} is forward invariant for \eqref{eq:augmented_adaptive_hot_ss}.
\end{proof}

Theorem \ref{thm:t-racbf}
establishes forward invariance of a safe set in the joint state space of plant states and parameter estimates. Like Theorem \ref{thm:racbf}, it does not just establish invariance but shows that \eqref{eq:ha-tunable} is a barrier function in the sense of Def. \ref{def:barrier}. To use Theorem \ref{thm:t-racbf} to establish that $\bx(t)\in\mathcal{C}$, one must ensure that the initial conditions of \eqref{eq:augmented_adaptive_hot_ss} satisfy $h_a(\bx_0,\bnu_0,\bhat{\theta}_0)\geq0$. By choosing $\bnu_0=\bhat{\theta}_0$ this can be accomplished provided:
\begin{equation}\label{eq:t-racbf-gain-condition}
    h(\bx_0) \geq \frac{\|\btheta - \bnu_0\|^2}{2\lambda_{\min}(\bGamma)} \geq \frac{\|\btilde{\vartheta}_0\|^2}{2\lambda_{\min}(\bGamma)},
\end{equation}
which is similar to \eqref{eq:acbf-gain-condition}. Despite these similarities, the benefit of the proposed approach lies in the role that $\bGamma$ plays in parameter estimation.
Higher values of $\bGamma$ needed to satisfy \eqref{eq:t-racbf-gain-condition} generate the \emph{intermediate} estimate $\bnu$ in \eqref{eq:update-law-hot}, which is not directly used in \eqref{eq:tunable_racbf} to compute inputs. These inputs instead leverage $\bhat{\theta}$, a filtered version of $\bnu$ generated in \eqref{eq:update-law-hot}, with an adaptation gain dictated by $\beta$. Thus, larger $\bGamma$ may be offset by decreasing $\beta$, provided \eqref{eq:beta-condition} holds. The practical benefits of this approach will be illustrated in Sec. \ref{sec:sim}.



\section{Adaptive Safety for Robotic Systems}



\label{sec:robotic}
In this section, we extend our framework to fully actuated robotic systems modeled using the \emph{manipulator equations}:
\begin{equation}\label{eq:robot-dyn}
    \bM(\bq)\ddot{\bq} + \bC(\bq,\dot{\bq})\dot{\bq} + \bg(\bq) = \bu,
\end{equation}
where $\bq\in\R^n$ is the configuration and $\bu\in\R^n$ is the control input. The term $\bM:\R^n\rightarrow\R^{n\times n}$ is the positive definite inertia matrix, $\bC:\R^{2n}\rightarrow\R^{n\times n}$ captures Coriolis terms, and $\bg:\R^n\rightarrow\R^n$ represents gravitational and other potential forces. A useful property of \eqref{eq:robot-dyn} for adaptive control is that the left-hand-side is linear in the unknown parameters \cite{parikh2019integral}.

\begin{property}\label{property:linear-in-parameters}
    The dynamics in \eqref{eq:robot-dyn} are linear in the unknown parameters $\btheta\in\R^p$ in that there exists a locally Lipschitz $\bY:\R^{4n}\rightarrow\R^{n\times p}$ such that for all $(\bq,\dot{\bq},\br,\dot{\br})\in\R^{4n}$:
    \begin{equation}\label{eq:LIP}
        \bM(\bq)\dot{\br} + \bC(\bq,\dot{\bq})\br + \bg(\bq) = \bY(\bq,\dot{\bq},\br,\dot{\br})\btheta.
    \end{equation}
\end{property}
Although it is possible to invert 
the inertia matrix 
to represent \eqref{eq:robot-dyn} in control affine form \eqref{eq:general_knowen_nonlinear_system}, doing so destroys the structure in \eqref{eq:LIP}. Among other issues, if any such parameters are contained in $\bM$,
then inverting $\bM$ would imply that the control directions $\bG$ are unknown, invalidating the assumptions on the parameters from \eqref{eq:general_nonlinear_system}. Hence, directly working with \eqref{eq:robot-dyn} rather than control affine representations offers various advantages in the context of adaptive control.
Other useful properties of \eqref{eq:robot-dyn} 
are outlined below \cite{parikh2019integral}.
\begin{property}\label{property:skew_semetric}
The inertia $\bM$ and Coriolis matrices $\bC$ satisfy the skew-symmetric property for all $(\bs,\bq,\dot{\bq})\in\R^{3n}$:
\begin{align}
    \bs\T \big[\tfrac{1}{2}\dot \bM(\bq,\dot{\bq})-\bC(\bq,\dot{\bq})\big]\bs = 0.
\end{align}

\end{property}

\begin{property}\label{property:inertia-bound}
    There exist constants $\underline{M},\overline{M}>0$ such that:
    \begin{equation}
        \underline{M}\|\bs\|^2 \le \bs^\top \bM(\bq)\bs \le \overline{M}\|\bs\|^2,\quad\forall(\bq,\bs)\in\R^{2n}.
    \end{equation}
\end{property}

Our main objective is to develop controllers for \eqref{eq:robot-dyn} that account for parametric uncertainty while guaranteeing that the configuration remains within a prescribed safe set: 
\begin{equation}\label{eq:C-robot}
    \mathcal{C} = \{\bq\in\R^n \mid h(\bq) \ge 0\},
\end{equation}
where $h:\R^n\rightarrow\R$ is continuously differentiable. Our objective can thus be formalized as ensuring that $h(\bq(t))\geq0$ along closed-loop trajectories, i.e., $\bq(t)\in\mathcal{C}$ for $t\ge0$.

We take a similar approach to 
\cite{TamasRAL22,CohenARC24} by using CBFs to generate a safe reference velocity, which is tracked by \eqref{eq:robot-dyn} to ensure safety. This approach can also be seen as leveraging
\emph{sliding variables} $\bs$, a classic approach to adaptive control pioneered by Slotine and Li \cite{SlotineIJRR87,SlotineLi}, wherein driving $\bs\rightarrow\bzero$ implies the satisfaction of desirable properties of \eqref{eq:robot-dyn}, traditionally stabilization. Here, we demonstrate that this approach can also be used for safety. 
To this end, define: 
\begin{equation}\label{eq:sliding-variable}
    \bs = \bs(\bq,\dot{\bq}, t) \coloneqq \dot{\bq} - \br(\bq, t),
\end{equation}
which quantifies the error between $\dot{\bq}$ and a continuously differentiable reference velocity $\br\,:\,\R^n\times\R\rightarrow\R^n$ whose properties will be specified shortly. This variable has relative degree one and is defined such that $\bs\approx\bzero$ implies that $\dot{\bq}\approx\br(\bq,t)$ so that the configuration velocity inherits the properties of $\br$. 
To 
accomplish this,
we leverage a modified version of the canonical Slotine-Li adaptive controller \cite{SlotineIJRR87,SlotineLi} to ensure compatibility with our high order tuner:
\begin{equation}\label{eq:slotine-li-controller}
    \bu = -\bK\bs + \bW(\bq,\dot{\bq},t)\bhat{\theta}  - \frac{2}{\beta}\bW(\bq,\dot{\bq},t)\bW(\bq,\dot{\bq},t)\T\bs,
\end{equation}
\begin{equation}\label{eq:robot_update_1}
    \dot{\bnu} = -\bGamma\bW(\bq,\dot{\bq},t)\T\bs,
\end{equation}
\begin{equation}\label{eq:robot_update_2}
    \bdothat{\theta} = \beta\bGamma(\bnu - \bhat{\theta}),
\end{equation}
where $\bK\in\R^{n\times n}$ is a positive definite gain, $\bGamma\in\R^{p\times p}$, $\beta>0$ are as in \eqref{eq:update-law-hot}, and:
\begin{equation}\label{eq:modified-regressor}
    \bW(\bq,\dot{\bq},t) \coloneqq \bY(\bq,\dot{\bq},\br(\bq,t),\dot{\br}(\bq,\dot{\bq},t)),
\end{equation}
is a modified regressor matrix. The definition in \eqref{eq:modified-regressor} is motivated by the observation that:
\begin{equation}\label{eq:sliding-dynamics}
    \bM(\bq)\dot{\bs} = \bu - \bW(\bq,\dot{\bq},t)\btheta - \bC(\bq,\dot{\bq})\bs,
\end{equation}
which one may verify by differentiating \eqref{eq:sliding-variable}, premultiplying by $\bM$, substituting \eqref{eq:robot-dyn}, and using \eqref{eq:LIP}. 
Performance of this controller is then analyzed using the Lyapunov-like function:
\begin{equation}\label{eq:lyaponov_robot}
    V(\bz,t) = \tfrac{1}{2}\big(
    \bs^{\top}\mathbf{M}(\mathbf{q})\bs
    + \tilde{\boldsymbol{\theta}}^{\top}\boldsymbol{\Gamma}^{-1}\tilde{\boldsymbol{\theta}}
    + \tilde{\boldsymbol{\nu}}^{\top}\boldsymbol{\Gamma}^{-1}\tilde{\boldsymbol{\nu}}
    \big),
\end{equation}
with $\btilde{\theta}\coloneqq\btheta-\bnu$, $\btilde{\nu}\coloneqq\bnu-\bhat{\theta}$, and $\bz\coloneqq(\bq,\dot{\bq},\bnu,\bhat{\theta})\in\R^{2n + 2p}$.
The following lemma establishes a useful property of \eqref{eq:lyaponov_robot}.

\begin{lemma}\label{lemma:asymptotic-tracking}
    Consider system \eqref{eq:robot-dyn} satisfying Properties \ref{property:linear-in-parameters}-\ref{property:inertia-bound} and let $\bs$ be defined as in \eqref{eq:sliding-variable}. 
    The controller \eqref{eq:slotine-li-controller} and update laws \eqref{eq:robot_update_1}-\eqref{eq:robot_update_2} ensure that \eqref{eq:lyaponov_robot} satisfies:
    \begin{equation}\label{eq:vdot_robot}
        \dot{V}(\bz,t) \leq -\lambda_{\min}(\bK)\|\bs\|^2 - \tfrac{\beta}{2}\|\btilde{\nu}\|^2,
    \end{equation}
    for all $(\bz,t)\in\R^{2n + 2p+1}$.
\end{lemma}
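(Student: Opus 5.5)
The plan is to differentiate \eqref{eq:lyaponov_robot} along closed-loop trajectories and mirror the completion-of-squares argument from the proof of Theorem \ref{thm:t-racbf}. Since $\btheta$ is constant, $\dot{\btilde{\theta}} = -\dot{\bnu}$ and $\dot{\btilde{\nu}} = \dot{\bnu} - \bdothat{\theta}$. This gives
\begin{equation*}
\dot V = \bs\T\bM\dot{\bs} + \tfrac{1}{2}\bs\T\dot{\bM}\bs - \btilde{\theta}\T\bGamma^{-1}\dot{\bnu} + \btilde{\nu}\T\bGamma^{-1}(\dot{\bnu} - \bdothat{\theta}).
\end{equation*}

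First, substitute the sliding dynamics \eqref{eq:sliding-dynamics} for $\bM\dot{\bs}$. Property \ref{property:skew_semetric} cancels the $\tfrac{1}{2}\bs\T\dot{\bM}\bs - \bs\T\bC\bs$ terms, leaving $\bs\T(\bu - \bW\btheta)$. Inserting the controller \eqref{eq:slotine-li-controller} gives
\begin{equation*}
\bs\T\bu - \bs\T\bW\btheta = -\bs\T\bK\bs - \tfrac{2}{\beta}\|\bW\T\bs\|^2 + \bs\T\bW(\bhat{\theta} - \btheta).
\end{equation*}
Next, write $\bhat{\theta} - \btheta = -\btilde{\theta} - \btilde{\nu}$, which is the analog of the ``adding zeros'' step in Theorem \ref{thm:t-racbf}. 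The term $-\bs\T\bW\btilde{\theta}$ cancels against $-\btilde{\theta}\T\bGamma^{-1}\dot{\bnu} = \btilde{\theta}\T\bW\T\bs$ from \eqref{eq:robot_update_1}. The remaining cross term is $-\btilde{\nu}\T\bW\T\bs$.

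For the $\btilde{\nu}$ terms, substitute \eqref{eq:robot_update_1}--\eqref{eq:robot_update_2}. This gives $\btilde{\nu}\T\bGamma^{-1}\dot{\bnu} = -\btilde{\nu}\T\bW\T\bs$ and $-\btilde{\nu}\T\bGamma^{-1}\bdothat{\theta} = -\beta\|\btilde{\nu}\|^2$. Collecting everything, with $\bpsi \coloneqq \bW\T\bs$, yields
\begin{equation*}
\dot V = -\bs\T\bK\bs - \beta\|\btilde{\nu}\|^2 - 2\btilde{\nu}\T\bpsi - \tfrac{2}{\beta}\|\bpsi\|^2.
\end{equation*}

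Finally, apply the identity from the proof of Theorem \ref{thm:t-racbf} with the sign flipped:
\begin{equation*}
\tfrac{\beta}{2}\|\btilde{\nu}\|^2 + 2\btilde{\nu}\T\bpsi + \tfrac{2}{\beta}\|\bpsi\|^2 = \tfrac{\beta}{2}\|\btilde{\nu} + \tfrac{2}{\beta}\bpsi\|^2 \ge 0.
\end{equation*}
Subtracting this from the expression above, together with $\bs\T\bK\bs \ge \lambda_{\min}(\bK)\|\bs\|^2$, gives \eqref{eq:vdot_robot}. Every step is an identity or a pointwise bound, so the result holds for all $(\bz,t)$.

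The main obstacle is sign bookkeeping. One has to check that the $\btilde{\nu}$ cross terms coming from the controller's $\bW\bhat{\theta}$ and from $\dot{\bnu}$ in $\dot{\btilde{\nu}}$ add to $-2\btilde{\nu}\T\bpsi$ rather than cancel. This is precisely why the damping term $-\tfrac{2}{\beta}\bW\bW\T\bs$ is needed to complete the square. Property \ref{property:inertia-bound} is not needed for the derivative bound itself.
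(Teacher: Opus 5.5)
Your proposal is correct and follows the paper's proof essentially step for step. Both arguments substitute \eqref{eq:sliding-dynamics} and Property \ref{property:skew_semetric}, insert the controller, and split $\btheta-\bhat{\theta}=\btilde{\theta}+\btilde{\nu}$ (the paper's ``$\pm\bs\T\bW\bnu$'' step); they then cancel the $\btilde{\theta}$ terms via \eqref{eq:robot_update_1}, use \eqref{eq:robot_update_2} to obtain $-\beta\|\btilde{\nu}\|^2-2\btilde{\nu}\T\bW\T\bs$, and complete the square.
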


\begin{proof}
The time derivative of $V$ from \eqref{eq:lyaponov_robot} is:
\begin{equation*}
    \begin{aligned}
        \dot V
        = & \bs\T \bM(\bq)\dot \bs
        + \tfrac{1}{2}\bs\T \dot{\bM}(\bq,\dot{\bq})\bs
        - \btilde{\theta}\T \bGamma^{-1}\dot{\bnu}
        +\btilde{\nu}^\top \bGamma^{-1}\bdottilde{\nu} \\ 
        = & \bs\T[ \bu - \bW(\bq,\dot{\bq},t)\btheta - \bC(\bq,\dot{\bq})\bs] + \tfrac{1}{2}\bs\T \dot{\bM}(\bq,\dot{\bq})\bs \\ 
        & - \btilde{\theta}\T \bGamma^{-1}\dot{\bnu} +\btilde{\nu}^\top \bGamma^{-1}\bdottilde{\nu} \\ 
        = & \bs\T[ \bu - \bW(\bq,\dot{\bq},t)\btheta]  - \btilde{\theta}\T \bGamma^{-1}\dot{\bnu} +\btilde{\nu}^\top \bGamma^{-1}\bdottilde{\nu},
    \end{aligned}
\end{equation*}
where the second equality follows from \eqref{eq:sliding-dynamics} and the third from Property \ref{property:skew_semetric}. Choosing the control law as in \eqref{eq:slotine-li-controller}
yields:
\begin{equation}
    \begin{aligned}
        \dot V
        &=
        \bs^\top
        [
        -
        \mathbf K\bs-\tfrac{2}{\beta}\bW\bW^{\top}\bs-\bW(\btheta-\bhat{\theta})]
        \\  & - \btilde{\theta}\T \bGamma^{-1}\dot{\bnu}
        +
        \btilde{\nu}^\top
        \boldsymbol{\Gamma}^{-1}
        \bdottilde{\nu},
    \end{aligned}
\end{equation}
where functional dependencies have been suppressed for brevity.
Expanding the above and adding a zero yields:
\begin{equation}
    \begin{aligned}
        \dot V = &- \bs\T\bK\bs - \tfrac{2}{\beta}\bs\T\bW\bW\T\bs-\bs\T\bW(\btheta-\bhat{\theta}) \\
        & - \btilde{\theta}\T \bGamma^{-1}\dot{\bnu} + \btilde{\nu}^\top\bGamma^{-1}\bdottilde{\nu} \pm \bs\T\bW\bnu \\
        = & - \bs\T\bK\bs - \tfrac{2}{\beta}\|\bW\T\bs\|^2 -\bs^{\top}\bW(\btheta - \bnu) \\
        & - \bs^{\top}\bW(\bnu - \bhat{\theta})
        - \btilde{\theta}\T \bGamma^{-1}\dot{\bnu} +
        \btilde{\nu}^\top \bGamma^{-1}\bdottilde{\nu} \\
        = & - \bs\T\bK\bs - \tfrac{2}{\beta}\|\bW\T\bs\|^2 - \btilde{\theta}\T[\bGamma^{-1}\dot{\bnu} + \bW\T\bs] \\
        & + \btilde{\nu}\T[\bGamma^{-1}\dot{\bnu} - \bGamma^{-1}\bdothat{\theta} - \bW\T\bs],
    \end{aligned}
\end{equation}
where the final equality follows from the definitions of $\btilde{\theta}$ and $\btilde{\nu}$.
Choosing the first adaptation law as \eqref{eq:robot_update_1}
leads to:
\begin{equation*}
    \begin{aligned}
        \dot V = & - \bs\T\bK\bs - \tfrac{2}{\beta}\|\bW\T\bs\|^2 -\btilde{\nu}\T[\bGamma^{-1}\bdothat{\theta} + 2\bW\T\bs],
    \end{aligned}
\end{equation*}
and taking the second adaptation law as in \eqref{eq:robot_update_2} yields:
\begin{equation}
    \begin{aligned}
        \dot V = & - \bs\T\bK\bs - \tfrac{2}{\beta}\|\bW\T\bs\|^2 - \beta\|\btilde{\nu}\|^2 - 2\btilde{\nu}\T\bW\bs.
    \end{aligned}
\end{equation}
Noting that:
\begin{equation*}
    \tfrac{\beta}{2}\|\btilde{\nu}\|^2 + \tfrac{2}{\beta}\|\bW\T\bs\|^2 + 2\btilde{\nu}\T\bW\T\bs = \tfrac{\beta}{2}\|\btilde{\nu} + \tfrac{2}{\beta}\bW\T\bs\|^2 \geq 0,
\end{equation*}
allows $\dot{V}$ to be upper bounded as:
\begin{equation}
    \dot V \leq - \lambda_{\min}(\bK)\|\bs\|^2 - \tfrac{\beta}{2}\|\btilde{\nu}\|^2 \leq 0,
\end{equation}
demonstrating \eqref{eq:vdot_robot}, as desired.
\end{proof}

Lemma \ref{lemma:asymptotic-tracking} makes no claims of convergence of any signals. As $V$ is bounded from below and $\dot{V}\leq0$ by \eqref{eq:vdot_robot}, along closed-loop trajectories $t\mapsto \bz(t)$, 
$V$ is bounded implying that $\bs(\cdot),\bnu(\cdot),\bhat{\theta}(\cdot)$ are also bounded.
Although 
$\bs$ is bounded,
nothing can be said about boundedness of $\bq(\cdot),\dot{\bq}(\cdot)$ without further assumptions on $\br$. If one follows \cite{SlotineIJRR87} and takes:
\begin{equation}\label{eq:rd}
    \br(\bq,t)  = \br_{\rm{d}}(\bq,t) \coloneqq \dot{\bq}_{d}(t) + \bLambda(\bq - \bq_{\rm{d}}(t)),
\end{equation}
for a desired trajectory $\bq_{\rm{d}}(\cdot)\in\mathcal{L}_{\infty}$ and positive definite $\bLambda\in\R^{n\times n}$, then $\bs(\cdot)\in\mathcal{L}_{\infty}$ would imply that $\bq(\cdot),\dot{\bq}(\cdot)\in\mathcal{L}_{\infty}$, which could then be combined with a signal chasing argument and Barbalat's Lemma \cite[Thm. A.6]{Krstic} to establish boundedness of all closed-loop system signals and that $\bs\rightarrow\bzero$, cf. \cite{DanielLeACC22}. While boundedness of all closed-loop system signals is desirable in practice, it is not mathematically necessary for forward invariance if the safe set is unbounded\footnote{Example: $\mathcal{C}=\{x\in\R\mid x\geq0\}$ is forward invariant for $\dot{x} = x$.}.

\begin{remark}
    A similar control strategy to that above is presented in \cite{DanielLeACC22}.
    A key distinction between our approach and \cite{DanielLeACC22} is that the controller in \cite{DanielLeACC22} directly cancels cross terms involving the intermediate estimate $\bnu$, whereas \eqref{eq:slotine-li-controller} indirectly compensates for such cross terms via the nonlinear damping term $\bW\bW\T\bs$, thereby ensuring that \eqref{eq:slotine-li-controller} does not depend on $\bnu$, which may be generated with a large gain $\bGamma$. 
\end{remark}

Having shown that the adaptive controller \eqref{eq:slotine-li-controller}-\eqref{eq:robot_update_2} ensures $\bs\in\mathcal{L}_{\infty}$, we now outline the properties that $\br$ should satisfy so that boundedness of $\bs$ implies satisfaction of the constraint \eqref{eq:C-robot}. 
Inspired by \eqref{eq:tunable_racbf}, we design $\br$ such that:
\begin{equation}\label{eq:tunable_racbf_robotic}
    \begin{aligned}
        \tpdv{h}{\bq}(\bq) \br \geq - \alpha\left[h(\bq) - \tfrac{1}{2\mu}\btilde{\vartheta}\T\bGamma^{-1}\btilde{\vartheta} \right]+ \tfrac{1}{\epsilon}\|\tfrac{\partial h}{\partial \bq}(\bq)\|^2,
    \end{aligned}
\end{equation}
for each 
$\bq\in\R^n$,
where $\btilde{\vartheta}\in\R^p$  satisfies \eqref{eq:parameter-error-bound-nu}, and $\alpha>0$, $\mu >0$, $\epsilon>0$.
The interpretation of $\br$ is that of a safe reference velocity. If we could simply take $\dot{\bq}=\br(\bq,t)$, the above would be sufficient to establish forward invariance of $\mathcal{C}$ since $\dot{h}=\pdv{h}{\bq}\br$ would satisfy \eqref{eq:tunable_racbf_robotic}.
Note that a differentiable $\br$ satisfying \eqref{eq:tunable_racbf_robotic} exists under relatively mild conditions on $h$ \cite[Lem. 1]{CohenLCSS24} with explicit formulas available in \cite{CohenLCSS23,CohenARC24}. In addition, tracking a desired configuration $\bq_\mathrm{d}(t)$ objective can be incorporated into $\br$ (hence why $\br$ may be time-varying) by 
defining a desired velocity
such as $\br_{\rm{d}}$ in \eqref{eq:rd}
which then would be used as the ``desired controller" in a smooth safety filter \cite{CohenLCSS23,CohenARC24} to produce a differentiable $\br$ satisfying \eqref{eq:tunable_racbf_robotic}.

Since we cannot simply set $\dot{\bq}=\br$, we must choose $\bu$ to maintain $\dot{\bq}$ close to $\br$,
which we accomplish using the adaptive controller in \eqref{eq:slotine-li-controller}.
To formally establish safety properties of the controller from \eqref{eq:slotine-li-controller}, consider the set:
\begin{equation}\label{eq:S-adaptive}
    \mathcal{S}(t) = \{\bz\in\R^{2n+2p} \mid B(\bz, t) \geq 0\},
\end{equation}
\begin{equation}\label{eq:B}
    \begin{aligned}
        B(\bz, t) \coloneqq & h(\bq) - \tfrac{1}{\mu}V(\bz,t), \\ 
    \end{aligned}
\end{equation}
where $h$ is from \eqref{eq:C-robot}, $V$ is the Lyapunov-like function from \eqref{eq:lyaponov_robot}, and $\mu>0$. Note that $B(\bz,t)\geq0 \implies h(\bq)\geq0$ since $V\geq0$. Hence, establishing forward invariance of $\mathcal{S}(t)$ provides a pathway towards guaranteeing satisfaction of the configuration constraint in \eqref{eq:C-robot}. The following result provides conditions under which \eqref{eq:S-adaptive} is forward invariant.

\begin{theorem}\label{th:safety_robot}
     Let the hypotheses of Lemma \ref{lemma:asymptotic-tracking} hold. Let $\br\,:\,\R^n\times\R\rightarrow\R^n$ be any continuously differentiable reference velocity satisfying \eqref{eq:tunable_racbf_robotic}. Provided that \eqref{eq:beta-condition} holds, and:
    \begin{equation}\label{eq:safe-tracking-conditions}
          \lambda_{\min}(\bK) \geq \max\left\{\frac{\epsilon\mu}{2},~\alpha\overline{M}\right\},
    \end{equation}
    then $\mathcal{S}$ as in \eqref{eq:S-adaptive} is forward invariant for \eqref{eq:robot-dyn}, \eqref{eq:robot_update_1}-\eqref{eq:robot_update_2}.
\end{theorem}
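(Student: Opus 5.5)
The plan is to show that $B$ from \eqref{eq:B} is a (time-varying) barrier function for the closed-loop system \eqref{eq:robot-dyn}, \eqref{eq:slotine-li-controller}--\eqref{eq:robot_update_2}. Concretely, I aim to prove $\dot B(\bz,t) \geq -\alpha B(\bz,t)$ along trajectories. Forward invariance of $\mathcal{S}(t)$ then follows from the comparison lemma, or equivalently from Theorem \ref{thm:barrier} applied to the autonomous system obtained by appending $\dot t = 1$ to the state. Since $\dot B = \dot h - \tfrac{1}{\mu}\dot V$, I will bound the two pieces separately. Lemma \ref{lemma:asymptotic-tracking} handles $\dot V$, and the reference-velocity condition \eqref{eq:tunable_racbf_robotic} handles $\dot h$.

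\textbf{Bounding $\dot h$.} Write $\dot{\bq} = \bs + \br(\bq,t)$, so that $\dot h = \tpdv{h}{\bq}\br + \tpdv{h}{\bq}\bs$. Young's inequality gives $\tpdv{h}{\bq}\bs \geq -\tfrac{1}{\epsilon}\|\tpdv{h}{\bq}\|^2 - \tfrac{\epsilon}{4}\|\bs\|^2$. The gradient-squared term then cancels exactly against the $\tfrac{1}{\epsilon}\|\tpdv{h}{\bq}\|^2$ margin in \eqref{eq:tunable_racbf_robotic}, leaving
$\dot h \geq -\alpha h + \tfrac{\alpha}{2\mu}\btilde{\vartheta}\T\bGamma^{-1}\btilde{\vartheta} - \tfrac{\epsilon}{4}\|\bs\|^2$.

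\textbf{Bounding $-\tfrac{1}{\mu}\dot V$.} From \eqref{eq:vdot_robot},
$-\tfrac{1}{\mu}\dot V \geq \tfrac{\lambda_{\min}(\bK)}{\mu}\|\bs\|^2 + \tfrac{\beta}{2\mu}\|\btilde{\nu}\|^2$.

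\textbf{Comparing with $-\alpha B$.} Expand the target as $-\alpha B = -\alpha h + \tfrac{\alpha}{2\mu}\big(\bs\T\bM\bs + \btilde{\theta}\T\bGamma^{-1}\btilde{\theta} + \btilde{\nu}\T\bGamma^{-1}\btilde{\nu}\big)$ and match the three quadratic terms one at a time.
\begin{itemize}
\item The $\btilde{\theta}$ term is dominated by $\tfrac{\alpha}{2\mu}\btilde{\vartheta}\T\bGamma^{-1}\btilde{\vartheta}$, using \eqref{eq:parameter-error-bound-nu}. This is interpreted as in the proof of Theorem \ref{thm:t-racbf}: $\btilde{\vartheta}$ dominates $\btilde{\theta}$ in the $\bGamma^{-1}$-weighted quadratic form.
\item The $\btilde{\nu}$ term satisfies $\tfrac{\alpha}{2\mu}\btilde{\nu}\T\bGamma^{-1}\btilde{\nu} \leq \tfrac{\alpha}{2\mu\lambda_{\min}(\bGamma)}\|\btilde{\nu}\|^2$. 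This is at most $\tfrac{\beta}{2\mu}\|\btilde{\nu}\|^2$ by \eqref{eq:beta-condition}.
\item The $\bs$ term satisfies $\tfrac{\alpha}{2\mu}\bs\T\bM\bs \leq \tfrac{\alpha\overline{M}}{2\mu}\|\bs\|^2$ by Property \ref{property:inertia-bound}. The remaining requirement is therefore $\lambda_{\min}(\bK) \geq \tfrac{\epsilon\mu}{4} + \tfrac{\alpha\overline{M}}{2}$. This holds because the right-hand side is the average of $\tfrac{\epsilon\mu}{2}$ and $\alpha\overline{M}$, and so is bounded by the maximum in \eqref{eq:safe-tracking-conditions}.
\end{itemize}
Combining these gives $\dot B \geq -\alpha B$.

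\textbf{Expected difficulties.} The main delicate point is the $\btilde{\theta}$ step: the hypothesis \eqref{eq:parameter-error-bound-nu} is stated in the Euclidean norm, but it is used in the $\bGamma^{-1}$-weighted form. If needed, I would either assume the weighted version directly or replace $\bGamma^{-1}$ by $\lambda_{\min}(\bGamma)^{-1}$ consistently in \eqref{eq:tunable_racbf_robotic}. A second technical point is that $B$ is time-varying. I would handle this by applying Theorem \ref{thm:barrier} to the time-augmented state $(\bz,t)$, noting that $\br$ is continuously differentiable, so the closed loop is locally Lipschitz and $B$ is continuously differentiable in $(\bz,t)$.
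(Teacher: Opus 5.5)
Your proof is correct and follows the paper's argument: bound $\tfrac{\partial h}{\partial \bq}\br$ via \eqref{eq:tunable_racbf_robotic} and $-\tfrac{1}{\mu}\dot V$ via Lemma~\ref{lemma:asymptotic-tracking}, then dominate $-\alpha B$ term by term using Property~\ref{property:inertia-bound}, \eqref{eq:beta-condition}, and the $\btilde{\vartheta}$ bound (the paper makes the same $\bGamma^{-1}$-weighted use of that bound without comment). The only difference is minor and slightly sharper: the paper absorbs the cross term $\|\tfrac{\partial h}{\partial \bq}\|\|\bs\|$ through a positive semidefinite $2\times 2$ matrix built from half of the $\tfrac{\lambda_{\min}(\bK)}{\mu}\|\bs\|^2$ budget, which gives the max condition, whereas your tuned Young's inequality only needs $\lambda_{\min}(\bK)\geq \tfrac{\epsilon\mu}{4}+\tfrac{\alpha\overline{M}}{2}$, and this is implied by \eqref{eq:safe-tracking-conditions}.
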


\begin{proof}
Taking time derivative of $B(\bz,t)$ yields:
\begin{equation*}
        \dot{B}(\bz,t) =  \tpdv{h}{\bq}(\bq)\dot{\bq} - \tfrac{1}{\mu}\dot{V}(\bz,t) =  \tpdv{h}{\bq}(\bq)\br + \tpdv{h}{\bq}(\bq)\bs - \tfrac{1}{\mu}\dot{V}(\bz,t),
\end{equation*}
where the second equality follows from \eqref{eq:sliding-variable}.
Lower bounding the above using \eqref{eq:vdot_robot} and \eqref{eq:tunable_racbf_robotic} leads to:
\begin{equation}
    \begin{aligned}
        \dot{B} \geq & -\alpha h + \tfrac{\alpha}{2\mu}\btilde{\vartheta}\T\bGamma^{-1}\btilde{\vartheta} + \tfrac{1}{\epsilon}\|\tpdv{h}{\bq}\|^2 - \|\tpdv{h}{\bq}\|\|\bs\| \\ 
        & + \tfrac{\lambda_{\min}(\bK)}{\mu}\|\bs\|^2 + \tfrac{\beta}{2\mu}\|\btilde{\nu}\|^2,
    \end{aligned}
\end{equation}
where we used $\tpdv{h}{\bq}\bs\geq-|\tpdv{h}{\bq}\bs|\geq-\|\tpdv{h}{\bq}\|\|\bs\|$ which follows from the Cauchy-Schwarz inequality. Now, note that:
\begin{equation*}
    \begin{aligned}
        \tfrac{1}{\epsilon}\|\tpdv{h}{\bq}\|^2 - \|\tpdv{h}{\bq}\|\|\bs\| + \tfrac{\lambda_{\min}(\bK)}{2\mu}\|\bs\|^2 \\ 
        = 
        \begin{bmatrix}
          \|\pdv{h}{\bq}\| & \|\bs\| 
        \end{bmatrix} 
        \begin{bmatrix}
            \frac{1}{\epsilon} & -\frac{1}{2}\\ -\frac{1}{2} & \frac{\lambda_\mathrm{min}(\bK)}{2\mu}
        \end{bmatrix}
        \begin{bmatrix}
           \|\pdv{h}{\bq}\| \\ \|\bs\| 
        \end{bmatrix}.
    \end{aligned}
\end{equation*}
Thus, provided that \eqref{eq:safe-tracking-conditions} holds, the above matrix is positive semi-definite, allowing $\dot{B}$ to
be further lower bounded as:
\begin{equation*}
    \dot{B}(\bz,t) \ge -\alpha h + \tfrac{\alpha}{2\mu}\btilde{\vartheta}\T\bGamma^{-1}\btilde{\vartheta} + \tfrac{\lambda_{\min}(\bK)}{2\mu}\|\bs\|^2 + \tfrac{\beta}{2\mu}\|\btilde{\nu}\|^2.
\end{equation*}
Now, using the definition of $B$ from \eqref{eq:B}, we have:
\begin{equation*}
    B(\bz, t) \ge h(\bq)  -\tfrac{\overline{M}}{2\mu}\|\bs\|^2- \tfrac{1}{2\mu}\btilde{\vartheta}\T\bGamma^{-1}\btilde{\vartheta}
    - \tfrac{1}{2\mu}\btilde{\nu}\T\bGamma^{-1}\btilde{\nu},
\end{equation*}
which allows $\dot{B}$ to be further bounded as:
\begin{equation*}
    \begin{aligned}
        \dot{B} \geq & -\alpha[ B + \tfrac{\overline{M}}{2\mu}\|\bs\|^2+\tfrac{1}{2\mu}\btilde{\nu}\T\bm\Gamma^{-1}\btilde{\nu}] \\ 
        & + \tfrac{\lambda_{\min}(\bK)}{2\mu}\|\bs\|^2 + \tfrac{\beta}{2\mu}\|\btilde{\nu}\|^2 \\
        \geq & -\alpha B + \tfrac{1}{2\mu}[\lambda_{\min}(\bK) - \alpha\overline{M}]\|\bs\|^2 \\ 
        &+ \tfrac{1}{2\mu}[\beta - \tfrac{\alpha}{\lambda_{\min}(\bGamma)}]\|\btilde{\nu}\|^2.
    \end{aligned}
\end{equation*}
Provided that \eqref{eq:beta-condition} and  \eqref{eq:safe-tracking-conditions} hold, we may bound $\dot{B}$ as:
\begin{equation}
    \dot{B}(\bz,t) \ge -\alpha(B(\bz, t)),\quad\forall (\bz,t)\in\R^{2n + 2p+1},
\end{equation}
which implies that ${B}(\bz,t)$ is a barrier function\footnote{Safety results for time-varying barrier functions are analagous to their time-invariant counterparts. See \cite[Appendix A]{KondaLCSS21} for further details.} for \eqref{eq:robot-dyn}.
Invoking Theorem \ref{thm:barrier} then implies that the set $\mathcal{S}$ from \eqref{eq:S-adaptive} is forward invariant for the dynamics in \eqref{eq:robot-dyn}, \eqref{eq:robot_update_1}-\eqref{eq:robot_update_2}.
\end{proof}

In addition to the filter bandwidth requirement \eqref{eq:beta-condition}, the main condition of Theorem \ref{th:safety_robot} is \eqref{eq:safe-tracking-conditions}, which requires the control gain $\bK$ in \eqref{eq:slotine-li-controller} to be large enough to compensate for the rate of change of the reference velocity $\br$, dictated by $\alpha$ and $\epsilon$. Similar conditions arise in \cite{TamasRAL22,CohenARC24} on which our strategy is based. Like Theorems \ref{thm:acbf}-\ref{thm:t-racbf}, Theorem \ref{th:safety_robot} does not guarantee invariance of $\mathcal{C}$ from \eqref{eq:C-robot} but invariance of an auxiliary set $\mathcal{S}$ as in \eqref{eq:S-adaptive} whose invariance is sufficient to conclude that $\bq(t)\in\mathcal{C}$. Ensuring that $\bq(t)\in\mathcal{C}$ thus requires starting in $\mathcal{S}$, which, by taking $\bnu_0=\bhat{\theta}_0$ can be achieved via: 
\begin{equation}\label{eq:robot-acbf-gain-condition}
    h(\bx_0) \geq \frac{\overline{M}\|\bs_0\|^2}{2\mu} + \frac{\|\btilde{\vartheta}_0\|^2}{2\mu\lambda_{\min}(\bGamma)},
\end{equation}
with consequences similar to those in \eqref{eq:acbf-gain-condition} and \eqref{eq:t-racbf-gain-condition}.

Finally, if $\mathcal{C}$ is compact, one may establish boundedness of all closed-loop signals and convergence of $\bs$, $\btilde{\nu}$ to zero.
\begin{corollary}
Let the hypotheses of Theorem~\ref{th:safety_robot} hold. Suppose that $\mathcal{C}$ in~\eqref{eq:C-robot} is compact and that $\br(\bq,t)$, $\pdv{\br}{\bq}(\bq,t)$, $\pdv{\br}{t}(\bq,t)$ are uniformly bounded with respect to time. 
Provided $\bz_0\in\mathcal{S}(0)$, then $\bz(\cdot)\in\mathcal{L}_{\infty}$, $\bu(\cdot)\in\mathcal{L}_{\infty}$, and:
\begin{equation}\label{eq:asymptotic-tracking}
\lim_{t\rightarrow\infty}\|\dot{\bq}(t) - \br(\bq(t),t)\| = 0,~
\lim_{t\rightarrow\infty}\|\bnu(t)-\bhat{\theta}(t)\| = 0.
\end{equation}
\end{corollary}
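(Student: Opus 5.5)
The argument combines the forward invariance from Theorem~\ref{th:safety_robot} with the Lyapunov decrease from Lemma~\ref{lemma:asymptotic-tracking}, followed by signal chasing and Barbalat's Lemma \cite[Thm. A.6]{Krstic}.

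\textbf{Step 1: boundedness of the state.} Since $\bz_0\in\mathcal{S}(0)$, Theorem~\ref{th:safety_robot} gives $B(\bz(t),t)\geq0$ on the maximal interval of existence. As $V\geq0$, this yields $h(\bq(t))\geq0$, so $\bq(t)\in\mathcal{C}$, which is compact. Hence $\bq(\cdot)\in\mathcal{L}_\infty$. Next, \eqref{eq:vdot_robot} gives $\dot V\leq0$, so $V(\bz(t),t)\leq V(\bz_0,0)$. By Property~\ref{property:inertia-bound}, this bounds $\bs$, $\btilde{\theta}=\btheta-\bnu$ and $\btilde{\nu}=\bnu-\bhat{\theta}$, and therefore $\bnu$ and $\bhat{\theta}$ as well. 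Since $\br$ is uniformly bounded in time and $\bq$ lies in a compact set, $\dot{\bq}=\bs+\br$ is bounded. Thus $\bz(\cdot)\in\mathcal{L}_\infty$. Because the closed-loop vector field is locally Lipschitz, boundedness of $\bz$ on the maximal interval rules out finite escape, so the solution exists for all $t\geq0$.

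\textbf{Step 2: boundedness of $\bu$ and of the derivatives.} I note that $\dot{\br}=\pdv{\br}{\bq}\dot{\bq}+\pdv{\br}{t}$ is bounded by the uniform bounds on the partial derivatives. Then $\bW=\bY(\bq,\dot{\bq},\br,\dot{\br})$ has its arguments in a compact set. Since $\bY$ is locally Lipschitz, hence continuous, $\bW$ is bounded, and so is $\bu$ in \eqref{eq:slotine-li-controller}. From \eqref{eq:sliding-dynamics}, $\dot{\bs}=\bM^{-1}(\bu-\bW\btheta-\bC\bs)$. Here $\bM^{-1}$ is bounded by Property~\ref{property:inertia-bound}, and $\bC(\bq,\dot{\bq})$ is bounded on compacts (I assume $\bC$ is continuous). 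Hence $\dot{\bs}\in\mathcal{L}_\infty$. Likewise $\bdottilde{\nu}=\dot{\bnu}-\bdothat{\theta}=-\bGamma\bW\T\bs-\beta\bGamma\btilde{\nu}$ is bounded.

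\textbf{Step 3: convergence.} I integrate \eqref{eq:vdot_robot} from $0$ to $t$:
\[
\lambda_{\min}(\bK)\int_0^t\|\bs\|^2\,d\tau+\tfrac{\beta}{2}\int_0^t\|\btilde{\nu}\|^2\,d\tau\leq V(\bz_0,0).
\]
This gives $\bs,\btilde{\nu}\in\mathcal{L}_2$. Since their derivatives are bounded, $\|\bs\|^2$ and $\|\btilde{\nu}\|^2$ are uniformly continuous. Barbalat's Lemma then gives $\bs\to\bzero$ and $\btilde{\nu}\to\bzero$, which is exactly \eqref{eq:asymptotic-tracking}.

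\textbf{Main obstacle.} I expect the delicate step to be Step 2: establishing uniform continuity via bounded $\dot{\bs}$. That step needs boundedness of $\bW$, which involves $\dot{\br}$ and hence the hypotheses on $\pdv{\br}{\bq}$ and $\pdv{\br}{t}$. It also needs continuity of $\bY$ and $\bC$, so that compactness of the argument set can be used. The time-variance of $\br$ must also be handled: the uniform-in-time bounds let me place $(\bq,\dot{\bq},\br,\dot{\br})$ in a time-independent compact set.
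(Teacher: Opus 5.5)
Your proposal is correct and follows essentially the same route as the paper: forward invariance of $\mathcal{S}$ puts $\bq$ in the compact set $\mathcal{C}$, Lemma~\ref{lemma:asymptotic-tracking} bounds $\bs,\bnu,\bhat{\theta}$, the uniform-in-time bounds on $\br$ and its partial derivatives give boundedness of $\dot{\bq},\dot{\br},\bW,\bu,\dot{\bs},\bdottilde{\nu}$, and integrating \eqref{eq:vdot_robot} followed by Barbalat's Lemma yields the limits. Your use of Property~\ref{property:inertia-bound} to bound $\bM^{-1}$, and your appeal to continuity of $\bC$ when bounding $\dot{\bs}$, fill in details the paper leaves implicit.
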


\begin{proof}
    Since $\mathcal{S}$ is forward invariant and $\bz_0\in\mathcal{S}(0)$, we have $\bz(t)\in\mathcal{S}(t)$ for all $t\in I(\bz_0)$, with $I(\bz_0)$ the interval of existence.
    Further, since $\bz(t)\in\mathcal{S}(t)$ implies $\bq(t)\in\mathcal{C}$, and $\mathcal{C}$ is compact, 
    $\bq(t)$ 
    is bounded for all $t\in I(\bz_0)$.
    It follows from Lemma \ref{lemma:asymptotic-tracking} that $\bs(t),\bnu(t),\bhat{\theta}(t)$ are also bounded for $t\in I(\bz_0)$.
    Further, since $\br$ is continuous, 
    and $\br$ is bounded uniformly in $t$, 
    $\br(\bq(t),t)$ is also bounded for $t\in I(\bz_0)$. Using \eqref{eq:sliding-variable}, this also implies that $\dot{\bq}(t)$ is bounded for $t\in I(\bz_0)$. Hence, $\bz(t)$ is bounded for all $t\in I(\bz_0)$, implying $\bz(t)$ lies in a compact set and that $I(\bz_0)=[0,\infty)$ by \cite[Thm. 3.3]{Khalil}, i.e., $\bz\in\mathcal{L}_{\infty}$.
    Now note that $\dot \br(\bq,t) = \tpdv{\br}{\bq}(\bq,t)\dot \bq + \tpdv{\br}{t}(\bq,t)$, which implies that $\dot{\br}\in\mathcal{L}_{\infty}$ since $\br$ is continuously differentiable and the partial derivatives are bounded uniformly in $t$. Since $\bq,\dot{\bq},\br, \dot{\br}\in\mathcal{L}_{\infty}$ it follows that $\bY\in\mathcal{L}_{\infty}$, which implies that $\bW\in\mathcal{L}_{\infty}$. It then follows from \eqref{eq:slotine-li-controller}-\eqref{eq:robot_update_2} that $\bu,\dot{\bnu}, \bdothat{\theta}\in\mathcal{L}_{\infty}$, which also implies that $\bdottilde{\nu}\in\mathcal{L}_{\infty}$. From \eqref{eq:sliding-dynamics} we have $\dot{\bs}\in\mathcal{L}_{\infty}$. Since $\dot{\bs},\bdottilde{\nu}\in\mathcal{L}_{\infty}$, both $\bs$ and $\btilde{\nu}$ are uniformly continuous. Further, it follows \eqref{eq:vdot_robot} that $\bs,\btilde{\nu}\in\mathcal{L}_{2}$. 
    The conclusion in \eqref{eq:asymptotic-tracking}
    then follows from Barbalat's Lemma \cite[Lem. A.6]{Krstic}. 
\end{proof}

\section{Numerical Examples}
\label{sec:sim}
In this section, we illustrate advantages of higher order tuners in adaptive safety and demonstrate how our approach applies to robotic systems \eqref{eq:robot-dyn} via numerical simulations.

\smallskip
\noindent\textbf{Double Integrator.}
First, we compare adaptive safety with T-RaCBFs and RaCBFs for a double integrator modeled as:
\begin{equation}
    \begin{aligned}
    \dot x_1 &= x_2,\\
    \dot x_2 &= \theta_1 x_1 + \theta_2 x_2 + u,
    \end{aligned}
\end{equation}
where $\bx=(x_1,x_2)\in\R^2$ with position $x_1$ and velocity $x_2$, control input $u\in\R$, and unknown parameters $\btheta=(\theta_1,\theta_2)\in\R^2$. 
These dynamics take the form of \eqref{eq:general_nonlinear_system} with $\bf(\bx)=(x_2,0)$, $\bg(\bx)=(0,1)$, and $\bPhi(\bx)=(x_1,x_2)\T$.
The objective is to follow a reference position trajectory $x_{1\mathrm{d}}$ 
while ensuring that the position $x_1$ respects predefined constraints, $\vert x_1(t)\vert \leq x_{1\mathrm{max}}$ for all $t\ge0$. The nominal control law is derived using the backstepping method \cite{Krstic}. The CBF for the position constraint is defined based on CBF backstepping \cite{AndrewCDC22,CohenARC24,CohenLCSS24} and is given by:
\begin{equation}
    h(\bx) = ( x_{1\mathrm{max}}^2 - x_1^2) - \tfrac{1}{\rho}(x_2 + \Delta x_1)^2,
\end{equation}
which defines a set $\mathcal{C}$ as in \eqref{eq:C} and is used to construct controllers meeting the RaCBF \eqref{eq:racbf} and T-RaCBF \eqref{eq:tunable_racbf} conditions using a QP similar to \eqref{eq:cbf-qp}. The RaCBF \eqref{eq:update-law-acbf} and T-RaCBF \eqref{eq:update-law-hot} update laws are used with a projection operator \cite[App. E]{Krstic}
to keep the estimates bounded within a pre-defined set, ensuring the estimation errors also remain bounded within a pre-defined set, as in \eqref{eq:parameter-error-bound} and \eqref{eq:parameter-error-bound-nu}.

Finally, a simulation is run for $10s$, where the reference position is $x_{1\mathrm{d}} = 1.5\sin(2t)$, the unknown parameters are $\theta_1 = 10$, $\theta_2 = 10$, with $\|\btilde{ \vartheta}\| = 14.14$, the position constraint is $ x_{1\mathrm{max}}=1$, and the initial conditions are 
$\bx_0 = (0.75, 0)$, and $\bhat{\theta}_0 = (0, 0)$. Based on 
$\bx_0$, $\bhat{\theta}_0$,
\eqref{eq:acbf-gain-condition} requires $\lambda_{\mathrm{min}}(\bm\Gamma)\ge228$. The safety parameters are set to $\rho = 50$, $\Delta = 0.1$, $\alpha = 2.5$, $\bm\Gamma= 250\bI_2$, and $\beta = 0.05$. The same hyperparameters are chosen for both methods and the only extra tuning knob for the T-RaCBF is the filter bandwidth $\beta$. Fig. \ref{fig:double_integrator_hot_vs_robust} shows the simulation results comparing T-RaCBFs and RaCBFs. 

\begin{figure}[h]
    \centering
    \includegraphics[width=1\columnwidth]{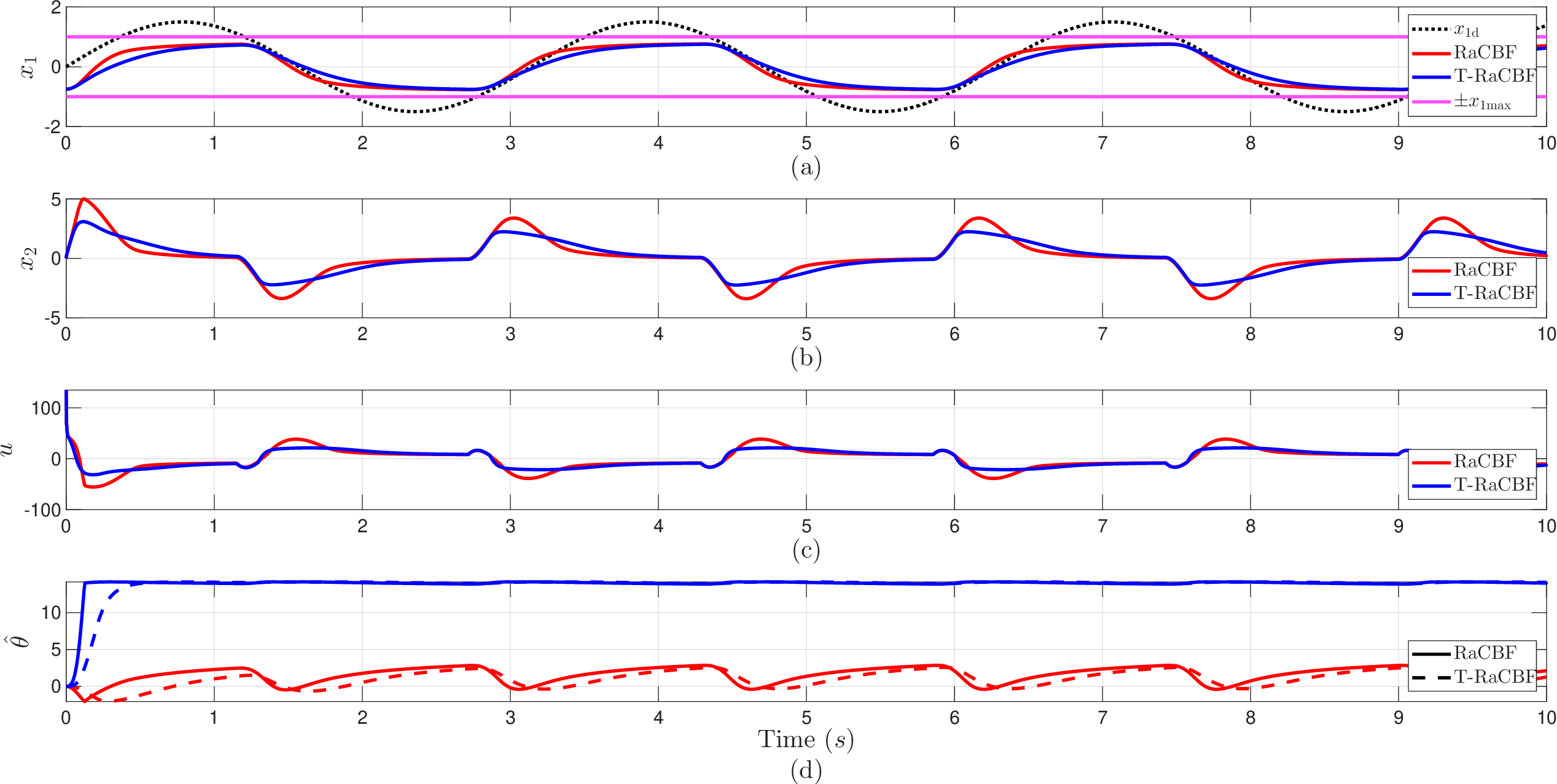}
    \vspace{-8mm}
    \caption{Adaptive safety of double integrator system with T-RaCBF and RaCBF. (a) shows position state $x_1$, (b) shows the velocity state $x_2$, (c) shows control signals, and (d) shows unknown parameter estimates.}
    \label{fig:double_integrator_hot_vs_robust}
    \vspace{-2mm}
\end{figure}

Simulation results demonstrate that both methods ensure safety. However, the T-RaCBF approach yields smoother control inputs and parameter estimates, and requires less control effort, due to the filtering structure of the high order tuner, which allows for modulating the effective adaptation gain via $\beta$ despite requiring large $\bGamma$ to satisfy \eqref{eq:acbf-gain-condition} and \eqref{eq:t-racbf-gain-condition}.

\smallskip
\noindent
\textbf{Robotic Manipulator.}
We use a two-link manipulator \cite{parikh2019integral} as an example of a robotic system \eqref{eq:robot-dyn} with $\bq=(q_1,q_2)\in\R^2$ denoting the angles of the links, and dynamics:
\begin{equation*}
\begin{aligned}
\bM(\bq)&=\begin{bmatrix}
p_1 + 2p_3 c_2 & p_2 + p_3 c_2 \\
p_2 + p_3 c_2 & p_2
\end{bmatrix},
\quad
\bg(\bq)= 
\begin{bmatrix}
    0 \\ 0
\end{bmatrix},
\\
\bC(\bq, \dot \bq) &=
\begin{bmatrix}
- p_3 s_2 \dot{q}_2 & - p_3 s_2 (\dot{q}_1 + \dot{q}_2) \\
p_3 s_2 \dot{q}_1 & 0
\end{bmatrix},
\quad
\bu =
\begin{bmatrix}
\tau_1 \\
\tau_2
\end{bmatrix},
\end{aligned}
\end{equation*}
where $p_1 = 3.473$, $p_2 = 0.196$, and $p_3 = 0.242$, which define $\btheta=(p_1,p_2,p_3)\in\R^3$, with $\|\btilde{ \vartheta}\| = 8.66$, and $\overline{M} = 5$. The terms $c_2$ and $s_2$ are defined as $c_2 \coloneqq  \cos{(q_2)}$, and $s_2 \coloneqq \sin{(q_2)}$. 
Our objective is to track a desired trajectory $\bq_{\rm{d}}(t)$ 
while satisfying constraints on $\bq$ of the form $\vert q_1 \vert \le q_\mathrm{m}$, $\vert q_2 \vert \le q_\mathrm{m}$. The tracking objective is handled with \eqref{eq:rd}. The constraints are handled via the smooth combination \cite{TamasLCSS23}:
\begin{equation}
h(\bq) = -\tfrac{1}{\lambda_h}
\log(
e^{-\lambda_h(q_\mathrm{m}^2 - q_1^2)} + e^{-\lambda_h(q_\mathrm{m}^2 - q_2^2)}
),
\end{equation}
which defines $\mathcal{C}$ as in \eqref{eq:C-robot} and is used in a smooth safety filter \cite{CohenLCSS23,CohenARC24} with $\br_{\rm{d}}$ acting as the desired controller to produce a differentiable $\br$ satisfying \eqref{eq:tunable_racbf_robotic}.
Note that existing approaches to adaptive safety \cite{AndrewACC20,LopezLCSS21,L1CBF,IsalyACC21,Cohen} that consider control affine systems \eqref{eq:general_nonlinear_system} are not compatible with the dynamics considered here, as inverting $\bM$ would destroy the LIP property and induce uncertainty in the control directions.

Simulations are run using the controller \eqref{eq:slotine-li-controller}-\eqref{eq:robot_update_2}, where the desired trajectory is
$\bq_\mathrm{d}(t) = (\frac{\pi}{4}\sin{(2t)}, \frac{\pi}{4}\sin{(2t)})\T$, with an initial condition $\bq_0 = \dot \bq_0 = \bzero$. The control gains are $\bLambda = 0.25\bI_2$, $\bK = 50 \bI_2$, and adaptation gains $\bm\Gamma = 150 \bI_3$, and $\beta = 0.25$. The safety parameters are set to $q_{\mathrm{m}} = \frac{\pi}{6}$, and $\lambda_h = \alpha = \mu = \epsilon = 10$. Fig. \ref{fig:two_link_comparison} shows the simulation results of safe control of the robotic manipulator system. 

\begin{figure}[h]
    \centering
    \vspace{-1mm}
    \includegraphics[width=1\columnwidth]{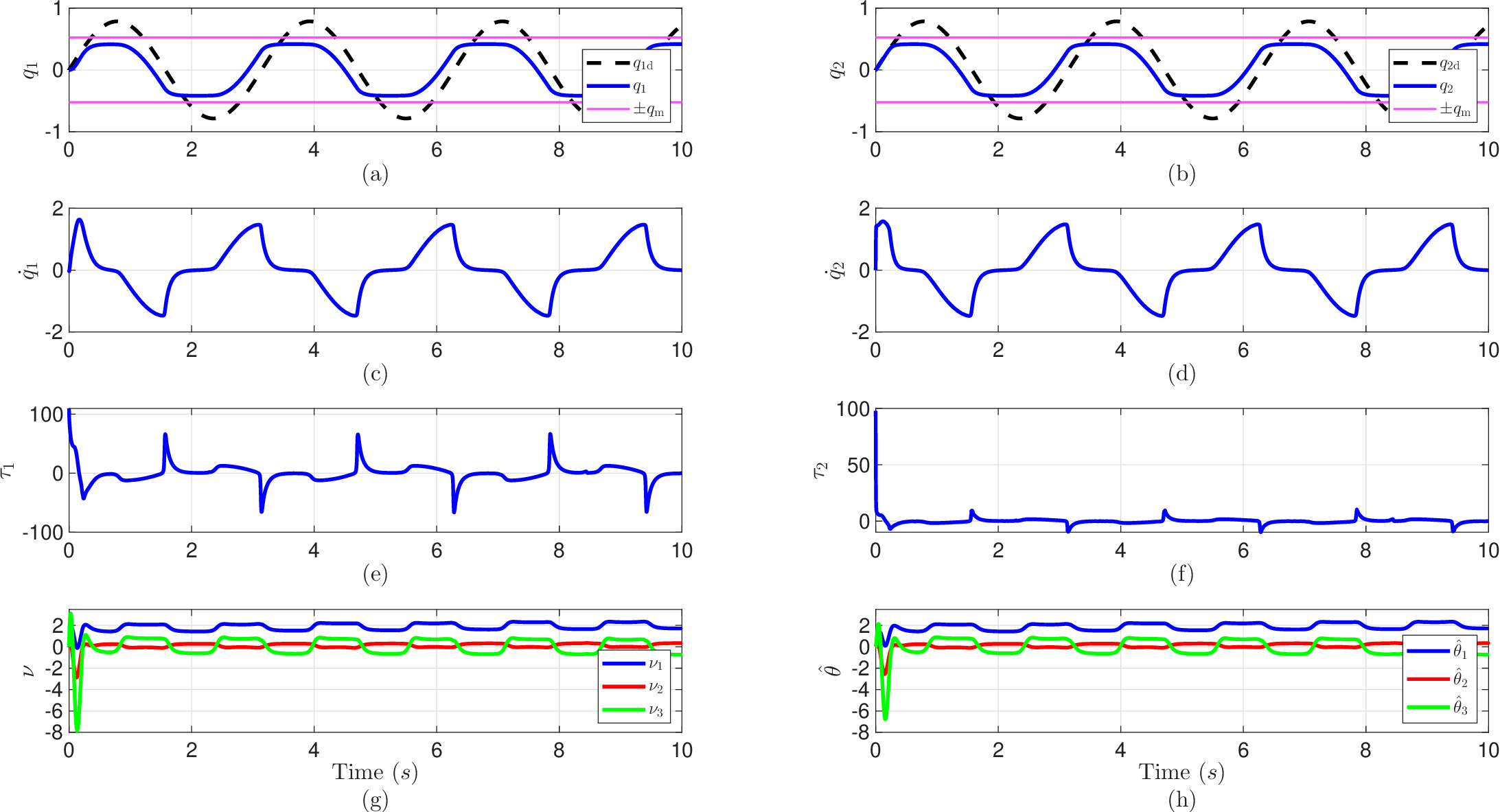}
    \vspace{-8mm}
    \caption{Adaptive safety of robotic system: (a) angle of the first link $q_1$ with reference and constraints; (b) angle of the second link $q_2$ with reference and constraints; (c)-(d) angular velocities $\dot q_1$, and $\dot q_2$; (e)-(f) torques applied to each link $\tau_1$ and $\tau_2$; (g) intermediate parameter estimates of the high order tuner $\bm \nu$; (h) parameter estimates $\hat{ \bm\theta}$.}
    \label{fig:two_link_comparison}
    \vspace{-2mm}
\end{figure}

The results demonstrate that the proposed adaptive control strategy ensures the satisfaction of configuration constraints by tracking a safe reference velocity produced by T-RaCBFs.

\section{Conclusions}
\label{sec:conclusion}
This paper presented a high-order tuner adaptive control architecture for safety-critical control that aims to mitigate high-gain conditions associated with existing approaches to adaptive safety. Our framework, including theoretical results and numerical examples, was illustrated for both systems with matched parametric uncertainty and robotic systems modeled using the manipulator equations.

\bibliographystyle{ieeetr}
\bibliography{biblio}

\end{document}